\begin{document}

\title{Distributed Planarization and Local Routing Strategies in Sensor Networks}
\author{Florian Huc$^1$ \and Aubin Jarry$^2$ \and Pierre Leone$^2$ \and Jose Rolim$^2$\\ \ \\
$^1$ LPD, \'Ecole Polytechnique F\'ed\'erale de Lausanne (EPFL)\\
$^2$ Computer Science Department, University of Geneva\\
}
\authorrunning{Aubin Jarry and Florian Huc and Pierre Leone and Jose Rolim}

\institute{Corresponding author: pierre.leone@unige.ch}
\date{}


\maketitle

\begin{abstract}
We present an algorithm which computes a planar 2-spanner from an Unit Disk Graph when the node density is sufficient. The communication complexity in terms of number of node's identifier sent by the algorithm is $6n$, while the computational complexity is $O(n\Delta)$, with $\Delta$ the maximum degree of the communication graph. Furthermore, we present a simple and efficient routing algorithm dedicated to the computed graph.

Last but not least, using traditional Euclidean coordinates, our algorithm needs the broadcast of as few as $3n$ node's identifiers. Under the hypothesis of sufficient node density, no broadcast at all is needed, reducing the previous best known complexity of an algorithm to compute a planar spanner of an Unit Disk Graph which was of $5n$ broadcasts.
\end{abstract}

\section{Introduction}
In many problems on networks, among which the problem of message routing  \cite{gfg,KK00,KWZ08}, it is useful to know a planar subgraph of the communication graph. Although, all planar subgraphs are not equally interesting: an usual requirement is that the length of a path between two nodes is not too much longer in the planar subgraph than in the original graph. 

In this paper, we propose a {\it{distributed and simple way to compute such a planar subgraph of a unit disk graph}} when the nodes of the communication graph are localized using the {\it{virtual raw anchors coordinate system}} \cite{HJLR10}, instead of the stronger hypothesis of having the nodes localized in a classical 2D coordinate system.

We further propose a {\it{simple, efficient and light routing algorithm}} that is dedicated to the constructed graph. This last contribution is related to the conjecture (partially solved in \cite{Dhandapani08} and fully solved \cite{DBLP:journals/dcg/LeightonM10,DBLP:conf/fct/GhoshS09}) that a 3-connected graph accepts an embedding such that the greedy routing algorithm\footnote{Given a distance among the nodes, for instance the Euclidean distance, we call the {\it{greedy routing algorithm}}, the algorithm which consists for a source $x$ to forward a message to the node that is the closest to the destination $y$ in that distance} is guaranteed to deliver any message. Indeed, we propose a routing algorithm ({\it{similar}} to the greedy routing), which ensures message delivery in the unit disk graph induced by any set of nodes, under some connectivity assumptions.

\subsection{Related work}
\setcounter{subsubsection}{0}

\subsubsection{Planar graph and poset dimension \cite{schnyder}}\label{schnyder}
it is proved that a graph $G=(V,E)$ is planar if and only if it has order-dimension at most three, where a graph has dimension $d$ if and only if there exists a sequence $<_1,\dots,<_d$ of total orders on $V$ whose intersection is empty, and such that for each edge $(xy)\in E$ and for each $z \in V\setminus \{x,y\}$, there is at least one order $<_j$ in the sequence such that $x<_j z$ and $y<_j z$.

It means that any three total orders whose intersection is empty induce a planar graph, this graph being the subgraph of the complete graph obtained by keeping only the edges that satisfy the second condition. We will refer to this graph as the Schnyder's graph of the three total order, and we note it $G^{Schnyder}_{<_1,<_2,<_3}$ or $G^{\cal S}$ for short.

\subsubsection{Planar spanner}
much work have been dedicated to the construction of planar subgraphs. One of the first planarization technique is the use of Gabriel graphs, but, if connectivity is preserved, an edge may be replaced by a path of unbounded length \cite{Gabriel}, whereas, as we mentioned previously, we want to avoid this. The following well known definition catches the type of subgraphs we are interested in: given a graph $G$ and a subgraph $G'$, we say that $G'$ is a {\it{$k$-spanner}} if, for all pairs of nodes $x,y$, a shortest path from a node $x$ to another node $y$ in $G'$, is at most $k$ times longer than a shortest path between these two nodes in $G$; the factor $k$ is called the stretch. If $G$ is the complete graph, we further say that $G'$ is a geometric $k$-spanner.

The Delaunay's triangulation of a set of vertices $V$ is a planar geometric spanner. Its stretch factor is upper bounded by $1.998$ \cite{Xia}, and lower bounded strictly by $\pi/2$ \cite{Bose2011}, the exact stretch being unknown. In \cite{Calinescu2002}, the authors efficiently construct a planar 2.5-spanner of UDG that contains the edges of length 1 of the Delaunay's triangulation of a set of nodes $V$. The complexity of this construction is improved in \cite{5Broadcast}, in which an algorithm needing 5 broadcast is proposed. An other construction of spanner of Unit Disk Graph is proposed in \cite{DBLP:conf/wads/BoseCCSX07} with stretch $>2$. Other spanners exist, in particular a way to construct a 2-spanner from a complete graph is proposed in~\cite{Chew89}. Interestingly, it is shown in~\cite{review} that three different constructions lead to the same planar geometric 2-spanner. These three constructions are the half-$\theta_6$-graphs, the triangular-distance Delaunay triangulation (TD Delaunay graphs) and the geodesic embeddings. In \cite{review2}, the authors further propose a planar spanner with bounded degree. We refer the interested reader to the recent survey of Bose and Smid \cite{survey}.


\subsubsection{Greedy embedding}
When one consider building a spanner, one usually does not focus on preserving easy routing properties such as greedy routing. It means that even if the greedy routing algorithm delivers a message to the destination in the original communication graph, it has no guarantee to succeed in an usual spanner. Preserving such a property would be of great interest.

This problem is related to the following conjecture~\cite{DBLP:conf/algosensors/PapadimitriouR04}: given a 3-connected graph, does an embedding exist such that the greedy routing algorithm is guaranteed to deliver any message ?

In \cite{Dhandapani08}, it is proved that the conjecture is true if the graph is a plane triangulation by using Schnyder's characterization of planar graphs \cite{schnyder}. Later, the conjecture was proved for every 3-connected graphs in \cite{DBLP:journals/dcg/LeightonM10,DBLP:conf/fct/GhoshS09}.

\subsection{Summary of results}
\setcounter{subsubsection}{0}

\subsubsection{VRAC}
For all the results presented in this paper, we assume that the nodes are localized using the virtual raw anchors coordinate system (VRAC, \cite{HJLR10}), or a simple variant of it. 
Supposing that the nodes are localized in these coordinate systems, is a strictly weaker hypothesis than the hypothesis that the nodes are localized in a traditional 2D coordinate system. Indeed, if the nodes are localized in a traditional 2D coordinate system, it is possible to compute their coordinates in the virtual raw anchors coordinate systems, while the converse is impossible. Furthermore, this coordinate system is expected to be easier to implement in practice.

\subsubsection{Planar subgraph and spanner}
As mentioned previously, a graph of order-dimension 3 is planar \cite{schnyder}. Hence, to planarize a graph, it is sufficient to select edges that correspond to three total orders. This technique has already been used, and, for instance, the half-$\theta_6$ graph mentioned in the previous section can be constructed along this line. However, several issues appears. First, this technique may need important computations. Second, the three orders being total, the computation may not be feasible locally. Third, the computed planar graph is a subgraph of the complete graph, and may not be a subgraph of the communication graph. In this paper, we address these issues when the communication graph is a Unit Disk Graph. To do so, we propose three total orders based on the VRAC coordinates using which we can construct a {\emph{planar graph}} such that if the node density is high enough,
\begin{enumerate}
\item it needs only comparison (no other operations of any type), 
\item it is a 2-spanner 
\item constructing it requires to broadcast at most $6n$ nodes identifiers.
\end{enumerate} 

In particular, our result improves the result of \cite{Calinescu2002} by construction a spanner with stretch factor 2 versus 2.5. Plus, using the VRAC coordinates, our algorithm induces the broadcasts of at most $6n$ node's identifiers (excluding the one needed for the neighborhood discovery), and has computational complexity $O(n\Delta)$, for $\Delta$ the maximum degree of $G$. Furthermore, using traditional Euclidean coordinates, it needs the broadcast of as few as $3n$ node's identifiers that can be done in a single communication round. As when the density is high enough, the constructed graph is a planar 2-spanner, {\emph{our work answers the open problem number 22 of \cite{survey}}}.

If this work is inspired by the paper of Schnyder \cite{schnyder}, we stress out that the constructed graphs are not necessarily subgraphs of $G^{\cal S}$, the planar graph induced by the three total orders as when following Schnyder's theory.

In more details, in Section \ref{sec:3} we construct a first subgraph $\widetilde G$ from $G$ by using only each node's neighbors. If we only keep the edges of length at most $2r/\sqrt 5 \approx 0.8944 r$, the subgraph $\widetilde G$ is planar. When the node density is too small, the obtained graph may not be connected, and, to avoid this, we introduce virtual edges (edges that are not edges of the connexion graph). 
In Section \ref{greedy}, changing slightly the VRAC coordinates system, we prove that $\widetilde G\prime$ is a subgraph of the half-$\theta_6$ graph which is equals to $G^{\cal S}$. It implies that $\widetilde G\prime$ is planar, but it gives no result on its stretch. Nonetheless, we prove that it verifies 1) the length of a shortest path in $\widetilde G\prime$ is at most twice the length of a shortest path in $G$, and 2) a virtual edge corresponds to a path of two edges in $G$. When the node density increases, the virtual edges disappear, hence, $\widetilde G\prime$ is a planar 2-spanner of $G$ when the node density is high enough. All these results hold even when the constructed graph is not equal to the half-$\theta_6$ graph. Finally, using the VRAC coordinates, our algorithm needs two round of communications and induces the broadcasts of $6n$ node's identifiers on top of the one needed for the neighborhood discovery. Using Euclidean coordinates, we can reduce this to the broadcasts $3n$ node's identifiers that can be performed in a single round of communication, and no messages are exchanged at all when the density is high enough.

\subsubsection{Routing}
In \cite{Dhandapani08}, it is proved that a plane triangulation has an embedding in which the greedy routing algorithm is guaranteed to deliver any message.
In our work, we assume given the embedding in the plane, so, it means that, instead of choosing an embedding for the nodes, we look at the dual problem, that is designing a routing algorithm (as close as possible to the greedy routing algorithm) which guarantees delivery. 

\section{The model}\label{model}
\subsection{Communication model}
\setcounter{subsubsection}{0}

We consider a wireless network in which two nodes can communicate if they are at distance at most $r$, the communication radius. We can normalize the distances so that $r=1$, in which case we have Unit Disk Graphs (UDG). However, we will keep mentioning $r$, as we believe it carries useful information. The use of the UDG model for the communication links is subject to caution from a practical point of view. We quickly mention the recent paper \cite{DBLP:conf/ipps/LebharL09} that discusses how protocols that are proved valid under the UDG model can be turned to valid protocols in the more realistic SINR model. Another way of extending the results of this paper to more general communication models is to use basic properties of such models like the convexity of the region where the communication can happen \cite{DBLP:conf/podc/AvinEKLPR09}. Indeed, it seems to us that most of the arguments that we use are related to this property.

The communication graph is given by the structure $(V,E)$ where $V$ is the set of nodes and $E$, the set of edges, i.e. the set of couples of nodes that can communicate together directly. We will use virtual edges. A virtual edge is an edge between two nodes $x$ and $y$ such that $(xy)\not\in E$, but with a path from $x$ to $y$ of edges of $E$.

Finally, we do not consider the impact of interferences or collisions during wireless communication.

\subsection{Coordinate system}\label{sec:def}
\setcounter{subsubsection}{0}
We use the virtual raw anchor coordinate system \cite{HJLR10} with three anchors $A_1, A_2, A_3$. It means that each node knows its distances to the three anchors, distances which form the node coordinates. I.e. the coordinates of node $x$ is the vector $\left(d(x,A_1),d(x,A_2),d(x,A_3)\right)$. 

\begin{definition} The coordinates of a node $x$ is a vector
\begin{equation*}
\left(x_1,x_2,x_3\right)=\left(d(x,A_1),d(x,A_2),d(x,A_3)\right)
\end{equation*}
\end{definition}

Throughout the paper, we suppose that all nodes lay inside the triangle defined by the three anchors on a 2D-plane, this area is denoted $\cal A$. We use two different distances to define the coordinate system. 

In Section \ref{sec:3}, we use the Euclidean distance for the distance function $d$. Given two points $x$ and $y$, we note $|xy|$ the Euclidean distance from $x$ to $y$, throughout the paper. 

In Section \ref{greedy}, we extend the results using for the distance $d(x,A_1),d(x,A_2),d(x,A_3)$, the heights of the triangles $\widehat{A_2xA_3}$, $\widehat{A_1 x A_3}$ and,$\widehat{A_1 x A_2}$ respectively. We will note this distance $d^h(x,A_i)$ or $d_{A_i}^h(x)$ for $1 \leq i \leq 3$. We further suppose that $\widehat{A_1A_2A_3}$ is equilateral and that all nodes know the distances between the anchors: $| A_1A_2|$, $| A_1A_3|$,$| A_2A_3|$.

\section{Distributed graph planarization}\label{sec:3}
In this section, given an Unit Disk Graph $G$, we build a planar subgraph $\widetilde G$. We further extend it to $\widetilde G\prime$ by changing some of its edges by virtual edges, where a virtual edge represents a path of $G$.

Recall that in \cite{schnyder} it is proven that if we consider a graph $G=\left(V,E\right)$ and that we have three total order relations, $<_1, <_2, <_3$, on the set of nodes and
\begin{itemize}
\item the intersection of the three order relations is empty,
\item for each edge $(x,y)\in E$ and for each vertex $z\not\in\{x,y\}$ there is at least one order $<_i$ such that $x<_i z$ and $y<_i z$.
\end{itemize}
then the graph admits a planar embeddings. In this paper, we adapt this result to UDG. It leads to a simple and {\emph{localized distributed}} algorithm to planarize a communication graph of a wireless network and to a simple description of the communication graphs that accept an efficient routing algorithm.

Our aim is to define three suitable order relations by using virtual raw anchor coordinate system. The order relations have to satisfy some properties to ensure that the resulting communication graph admits a planar embedding. In this paper, we show how to {\it{locally}} compute the planar embedding by using only the distances to the anchors. We assume that there are no pair of nodes $x,y$ such that for a given $k\in\{1,2,3\}$, $d(x,A_k)=d(y,A_k)$. It does not restrict the generality since the (Lebesgue) measure of these positions is zero.

Given the nodes' coordinates, we define three total order relations, $ <_1,  <_2,  <_3$ on the set of nodes $V$ in the following way:

\begin{definition} For $k \in \{1,2,3\}$, nodes $x$ and $y$ with coordinates $(x_1,x_2,x_3)$ and $(y_1,y_2,y_3)$ satisfy the relation $x~ <_k~ y$, if and only if $x_k<y_k$.
\end{definition}

\begin{lemma}\label{intervide} Given that the three anchors are not aligned, we consider the set of nodes that are inside $\cal A$, see Figure \ref{fig:notempty}. If we consider the restriction of the order relations $<_k$ on $\cal A\times A$ denoted $<_k|_{\cal A}$ then their intersection is empty.
\begin{equation}\label{eq:emptyinter}
\bigcap_{k=1}^3 <_k|_{\cal A} = \emptyset.
\end{equation}
\end{lemma}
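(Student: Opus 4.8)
The plan is to argue by contradiction. Suppose there exist two distinct nodes $x,y\in\mathcal A$ belonging to $\bigcap_{k=1}^3 <_k|_{\mathcal A}$, i.e. $|xA_k|<|yA_k|$ for each $k\in\{1,2,3\}$. I want to show this is incompatible with $y$ lying inside the triangle $A_1A_2A_3$.

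First I would record the elementary expansion of squared Euclidean distances. Writing $v=y-x\neq 0$, for each $k$ we have $|yA_k|^2=|xA_k|^2+2(x-A_k)\cdot v+|v|^2$, so the hypothesis $|yA_k|>|xA_k|$ is equivalent to $(A_k-x)\cdot v<\tfrac12|v|^2$ for every $k=1,2,3$. This is the only place the strict inequalities defining the $<_k$ are used, and it is essential that the endpoint expanded around is $x$ while the ``far'' endpoint is $y$.

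Next I would bring in the geometric hypothesis $y\in\mathcal A$: since $y$ lies in the (closed) triangle with vertices $A_1,A_2,A_3$, it has barycentric coordinates $y=\sum_{k=1}^3\mu_kA_k$ with $\mu_k\ge 0$ and $\sum_k\mu_k=1$, equivalently $\sum_k\mu_k(A_k-y)=0$. Dotting this with $v$ and using $A_k-x=(A_k-y)+v$ yields $\sum_k\mu_k(A_k-x)\cdot v=\sum_k\mu_k(A_k-y)\cdot v+\big(\sum_k\mu_k\big)|v|^2=|v|^2$. On the other hand, the inequalities from the previous step, averaged with the weights $\mu_k$ (which are nonnegative and sum to $1$), force $\sum_k\mu_k(A_k-x)\cdot v<\tfrac12|v|^2$. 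Hence $|v|^2<\tfrac12|v|^2$, i.e. $v=0$, contradicting $x\neq y$. Therefore no such pair exists, which is exactly $\bigcap_{k=1}^3 <_k|_{\mathcal A}=\emptyset$.

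The delicate point I would flag as the main obstacle is getting the orientation right: expanding around the far point $y$ instead of $x$, or invoking $x\in\mathcal A$ instead of $y\in\mathcal A$, produces an inequality that is merely consistent ($|v|^2>\tfrac12|v|^2$) and yields no contradiction. One must exploit the fact that it is precisely the point farther from all three anchors that is a convex combination of those anchors. The hypothesis that the three anchors are not aligned is used only to ensure $\mathcal A$ is a genuine two-dimensional triangle so that the barycentric representation is available; the argument goes through unchanged if $y$ lies on the boundary of $\mathcal A$, since then some $\mu_k$ vanish but the remaining weights still sum to $1$.
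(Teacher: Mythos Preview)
Your proof is correct and takes a genuinely different route from the paper's. The paper argues geometrically: fixing $x\in\mathcal A$, it shows that the three closed disks centred at the $A_k$ with radius $|xA_k|$ together cover $\mathcal A$, by partitioning $\mathcal A$ into the three triangles $\widehat{A_ixA_j}$ and then splitting each of those along the foot of the perpendicular from $x$ onto $A_iA_j$ so that each half visibly sits inside one disk. Your argument is purely analytic: you translate the three strict inequalities into half-space conditions $(A_k-x)\cdot v<\tfrac12|v|^2$ and then take a convex combination using the barycentric coordinates of $y$. What you gain is brevity and a dimension-free proof that isolates the essential asymmetry (only the \emph{farther} point $y$ needs to lie in the convex hull of the anchors, not $x$); what the paper's approach gains is a slightly stronger geometric statement (the explicit covering of $\mathcal A$ by the three disks) which matches the picture in Figure~\ref{fig:nocrossingb}. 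One small quibble: your remark that non-alignment is needed ``so that the barycentric representation is available'' is not quite right --- convex combinations exist even when the $A_k$ are collinear; your argument in fact goes through verbatim in that degenerate case as well.
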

\begin{proof} To prove that the intersection is empty is equivalent to prove that given any point $x$ that belongs to the convex hull of the three anchors the triangular area ${\cal A}$ is covered by the three circles centered on the anchors and passing through $x$. Indeed, if the intersection is not empty there is a point $y\in {\cal A}$ that belongs outside of the three circles (and reciprocally) , i.e. $x~<_k~ y$, for $k\in \{1,2,3\}$. 

Because the area ${\cal A}$ is the union of the three triangles    $\widehat{A_1 x A_3}$, $\widehat{A_1 x A_2}$ and $\widehat{A_2 x A_3}$, see Figure \ref{fig:nocrossingb},  it is sufficient to show that the three triangles are covered by the circles. We consider $\widehat{A_1 x A_3}$ particularly and the proof extend to the others triangles. We decompose the triangle $\widehat{A_1 x A_3}$ into two sub-triangles $\widehat{A_1x x\prime}$ and $\widehat{A_3x x\prime}$, where $x\prime$ is such that the line $xx\prime$ crosses the line $A_1A_3$ perpendicularly.  Because the length of the segment $A_1x$ is larger than the length of the segment $A_1x\prime$ the sub-triangle $\widehat{A_1xx\prime}$ is covered by the circle centered in $A_1$ and passing through $x$. The same argument apply to the sub-triangle $\widehat{A_3xx\prime}$ and this concludes the proof.
\end{proof}

\begin{figure}
\begin{center}
\scalebox{0.2}{\input{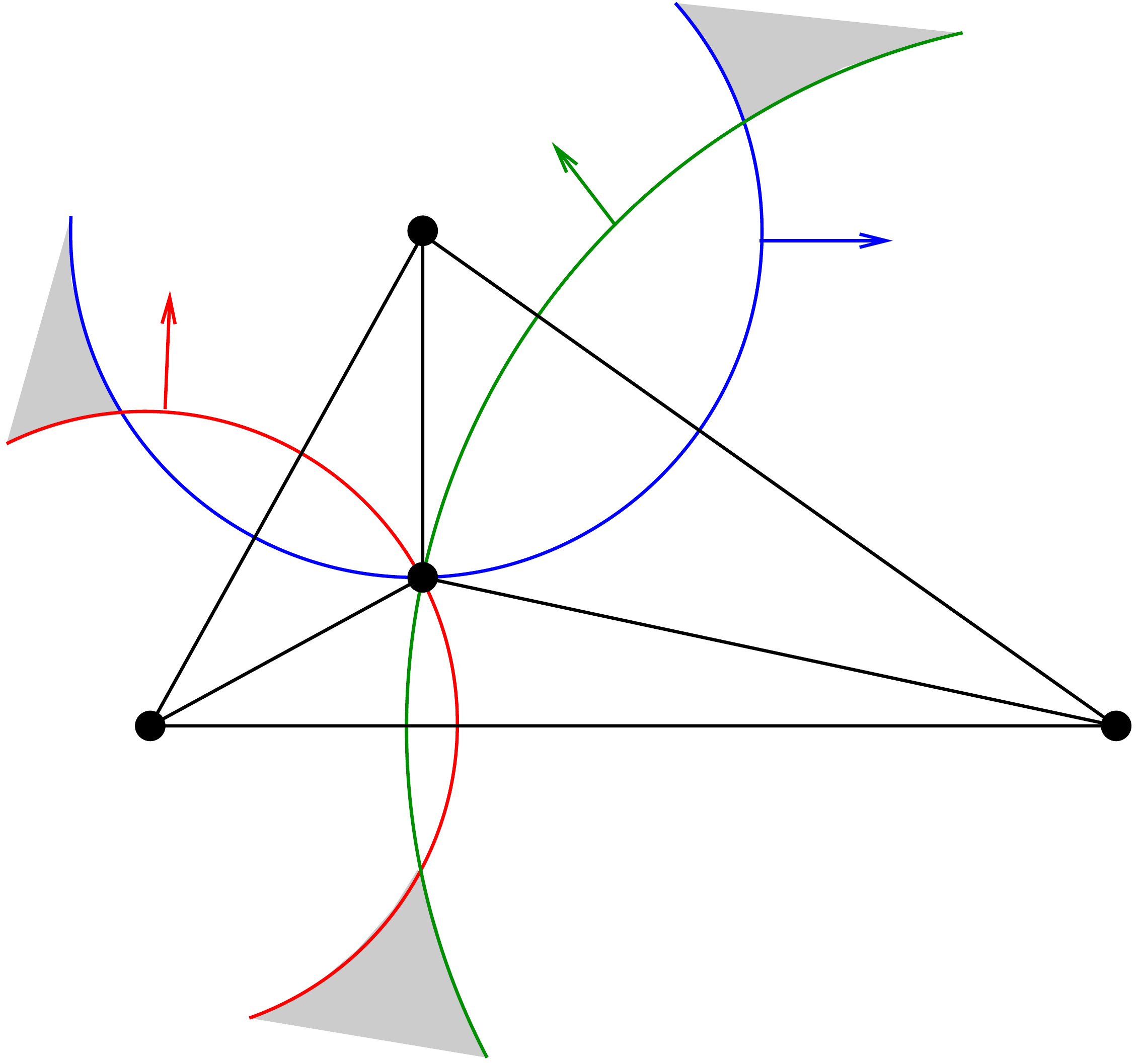_t}}\caption{If we do not restrict ourselves to the region $\cal A$, $\bigcap_{k=1}^3 <_k \neq \emptyset$. $\bigcap_{k=1}^3$ is represented in gray}\label{fig:notempty}
\end{center}\end{figure}

\begin{remark}
Notice that if we do not assume that the nodes belong to the area $\cal A$ then, the intersection $(\ref{eq:emptyinter})$ may not be empty. Indeed there are point $y$ whose the distances to the three anchors are larger than the distances of $x$ to the three anchors, i.e. $x<_k y,~\forall k=1,2,3$, see Figure~\ref{fig:notempty}.
\end{remark}

\begin{definition} We define the three binary relations $\widetilde <_1, \widetilde <_2, \widetilde  <_3$ by $\forall x,y \in V,~ k=1,2,3, ~ x\widetilde <_k y$ $\Longleftrightarrow~ x~ <_k~ y\text{ and } y~ <_j~ x\text{ for } j\not = k.$
\end{definition}

From Lemma \ref{intervide}, we deduce that the graph $G^{Schnyder}_{\widetilde<_1,\widetilde<_2,\widetilde<_3}$ induced by these three total orders is planar. However, as we mentioned in the introduction, there are some major issues: 1) $G^{Schnyder}_{\widetilde<_1,\widetilde<_2,\widetilde<_3}$ may not be a subgraph of an UDG, and 2) $G^{Schnyder}_{\widetilde<_1,\widetilde<_2,\widetilde<_3}$ can not be computed locally.

We denote $min_k$ the minimal $z$ with respect to the order relation $<_k$. The next lemma, gives a {\emph{local}} condition to ensure planarity.

\begin{lemma}\label{lemma:planarcond} Given an UDG $G=(V,E)$, if $\forall (x,y)\in E$ and $\forall z\in V\setminus\{x,y\}$ with $\max\{|xz|, |yz|\} < \sqrt{5}r/2$, there exists $k \in \{1,2,3\}$ such that $x<_k z$ and $y<_k z$ then the graph is planar.
\end{lemma}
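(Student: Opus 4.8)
The plan is to reduce the planarity of $G$ to Schnyder's theorem by exhibiting three total orders on $V$ with empty intersection such that the required covering condition holds for every edge of $G$ and every third vertex. The natural candidates are the restrictions of $<_1, <_2, <_3$ to $V$, since all nodes lie inside $\cal A$ and Lemma \ref{intervide} already gives $\bigcap_{k=1}^3 <_k|_{\cal A}=\emptyset$. Schnyder's characterization (Section \ref{schnyder}) then says that $G$ is planar provided that for each edge $(x,y)\in E$ and each $z\in V\setminus\{x,y\}$ there is an index $k$ with $x<_k z$ and $y<_k z$. So the entire content of the lemma is to bridge the gap between the hypothesis, which only controls vertices $z$ with $\max\{|xz|,|yz|\}<\sqrt5 r/2$, and the Schnyder condition, which quantifies over all $z$.

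First I would dispose of the ``far'' vertices. Fix an edge $(x,y)\in E$, so $|xy|\le r$, and let $z\in V\setminus\{x,y\}$ with $\max\{|xz|,|yz|\}\ge \sqrt5 r/2$. The claim is that the Schnyder condition is automatic here. The idea is geometric: the three circles centered at the anchors $A_1,A_2,A_3$ and passing through a point $p$ partition the plane into regions, and $z$ fails the Schnyder condition for the pair $x,y$ exactly when $z$ is separated from both $x$ and $y$ in all three orders, i.e. $z$ is farther from every anchor than $x$ is and farther from every anchor than $y$ is, or the analogous statement with $x$ playing the role witnessing each order. Quantitatively, if $z$ violates the condition then, relative to \emph{some} point of $\{x,y\}$, $z$ is ``beyond'' it with respect to each anchor; pushing this through the geometry of the anchor triangle (using that $x,y,z\in\cal A$ and that $|xy|\le r$) forces $z$ to be close to one of $x,y$ — specifically $\max\{|xz|,|yz|\}<\sqrt5 r/2$. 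Contrapositively, a far $z$ cannot violate the condition. I expect the constant $\sqrt5/2$ to emerge from the worst-case configuration: $x$ and $y$ at Euclidean distance $r$, with $z$ on the boundary of the union of the three anchor-circles through the relevant point, and the extremal case pinning the distance to $\sqrt5 r/2$.

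With the far vertices handled, the ``near'' vertices — those with $\max\{|xz|,|yz|\}<\sqrt5 r/2$ — are exactly the ones covered by the hypothesis, which directly supplies the index $k$. Hence every edge $(x,y)\in E$ and every $z\in V\setminus\{x,y\}$ satisfies the Schnyder condition with respect to $<_1,<_2,<_3$; combined with the empty intersection from Lemma \ref{intervide}, Schnyder's theorem yields a planar embedding of $G$.

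The main obstacle is the geometric estimate in the second step: showing that a vertex $z$ which is separated from both endpoints of an edge in all three anchor-orders must be within $\sqrt5 r/2$ of one endpoint. This requires a careful case analysis of which point of $\{x,y\}$ witnesses the separation in each of the three orders (there are a few combinatorial cases, some of which are ruled out by $\bigcap_k <_k|_{\cal A}=\emptyset$ applied to $x$ and $y$ themselves, since $x,y$ are comparable and an edge). The remaining cases reduce to: $z$ lies outside the union of the three circles centered at the anchors through $x$ (or through $y$), together with $z\in\cal A$; one then maximizes $|xz|$ (resp. $|yz|$) over that region and uses $|xy|\le r$ to conclude. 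I would carry out this extremal computation explicitly, expecting the bound to be tight, which is presumably also why only edges of length at most $2r/\sqrt5$ are retained later to get an honest planar \emph{subgraph} of the UDG.
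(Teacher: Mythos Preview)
Your approach is different from the paper's, and the central step does not go through.

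You want to upgrade the local hypothesis to the full Schnyder condition and then quote Schnyder's theorem. The crux is your claim that any $z$ with $\max\{|xz|,|yz|\}\ge\sqrt5\,r/2$ automatically admits an index $k$ with $x<_kz$ and $y<_kz$. This is false, and the constant $\sqrt5/2$ has nothing to do with the anchor geometry. Take the anchors equilateral and very far away, so that the orders become linear: $w<_k w'$ iff $w\cdot e_k>w'\cdot e_k$ for unit vectors $e_1,e_2,e_3$ at mutual angle $2\pi/3$. With $x=0$ and $y=re_1$, the set of $z$ failing the Schnyder condition for the pair $(x,y)$ is the triangle with vertices $(0,r/\sqrt3)$, $(0,-r/\sqrt3)$ and $(r,0)$. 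The vertex $(0,r/\sqrt3)$ has $|xz|=r/\sqrt3$ and $|yz|=2r/\sqrt3\approx1.155\,r>\sqrt5\,r/2\approx1.118\,r$. So a ``far'' $z$ can perfectly well violate Schnyder's condition, and your bridge from the local hypothesis to the global one collapses. (Your last paragraph also conflates ``close to one of $x,y$'' with $\max\{|xz|,|yz|\}<\sqrt5\,r/2$, which means close to \emph{both}.)

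The paper does not try to verify Schnyder's hypothesis for all $z$. It argues planarity directly by showing that no two edges of $G$ cross. The constant $\sqrt5\,r/2$ enters only through the elementary fact that two segments of length at most $r$ cannot intersect if some endpoint of one is too far from the endpoints of the other; this disposes of ``distant'' pairs of edges without any reference to the anchors. For the remaining case, where all four pairwise endpoint distances are below $\sqrt5\,r/2$, the paper applies the hypothesis \emph{symmetrically}: once with $(x,y)$ as the edge and $u,v$ as the third vertex, and once with $(u,v)$ as the edge and $x,y$ as the third vertex. This yields four indices $k_1,k_2,k_3,k_4\in\{1,2,3\}$ with $u,v<_{k_1}x$, $u,v<_{k_2}y$, $x,y<_{k_3}u$, $x,y<_{k_4}v$; a short pigeonhole argument then forces (up to swapping the roles of the two edges) $k_1=k_2$, so that the whole segment $[u,v]$ lies inside the circle centred at $A_{k_1}$ through $\min_{k_1}(x,y)$ while $[x,y]$ lies outside it, precluding a crossing. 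Your reduction to Schnyder never exploits this two-sided use of the hypothesis, which is precisely what makes the local condition sufficient.
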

\begin{proof} 
The condition that $\max\{|xz|, |yz\} < \sqrt{5}r/2$ is particular to UDG. Indeed, for an edge $(u,v)\in E$ with either $u$ or $v$ at distance larger than $\sqrt{5}r/2$ from $x$ and $y$, the two edges $(xy)$ and $(uv)$ (whose lengths are bounded by $r$) cannot intersect. This condition limits the set of nodes that potentially can be linked to an edge intersecting $(xy)$ and ensures that the verification can be done locally.

We now consider two edges $(xy)$ and $(uv)$, with both $u$ and $v$ at distance at most $\sqrt{5}r/2$ from $x$ and $y$. By assumption, there exist $k_1, k_2, k_3, k_4 \in \{1,2,3\}$ such that
\begin{equation*}
u,v ~<_{k_1}~x,~u,v~<_{k_2}~y,~x,y~<_{k_3}~u,~x,y~<_{k_4}~v.
\end{equation*}

It is clear that $k_1\not = k_3, k_4$ and $k_2\not = k_3, k_4$ and we can assume that $k_1=k_2$ and then $u,v ~<_{k_1}~\text{min(}x,y\text{)}$. Indeed, if $k_3\not = k_4$ we have $k_1=k_2$ because $k_i=1,2,3$. If $k_3= k_4$ we apply the same argument to $u,v$ instead of $x,y$.

We conclude that $(uv)$ do not cross $(xy)$ because each point of $(uv)$ are $<_{k_1}$ smaller than $x$ and $y$, see Figure \ref{fig:nocrossinga}. 
\end{proof}

\begin{figure}
\begin{center}
\subfigure[$u,v~<_{k_1}~\text{min(} x,y)$ implies that $(uv)$ and $(xy)$ do not intersect.]{\scalebox{0.14}{\input{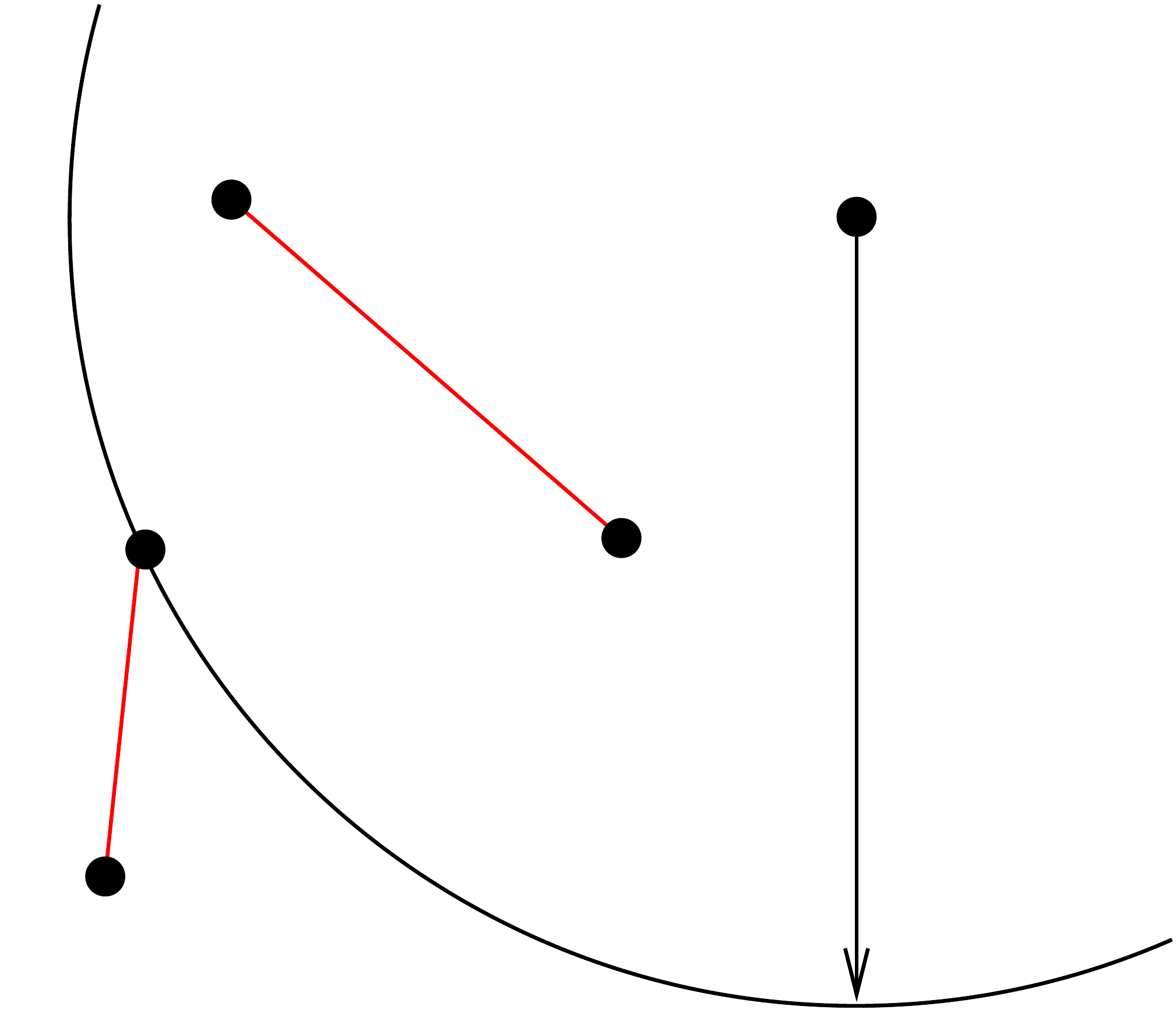_t}}\label{fig:nocrossinga}} \hspace{2cm}
\subfigure[$\bigcap_{k=1}^3 <_k|_{\cal A} = \emptyset.$]{\scalebox{0.21}{\input{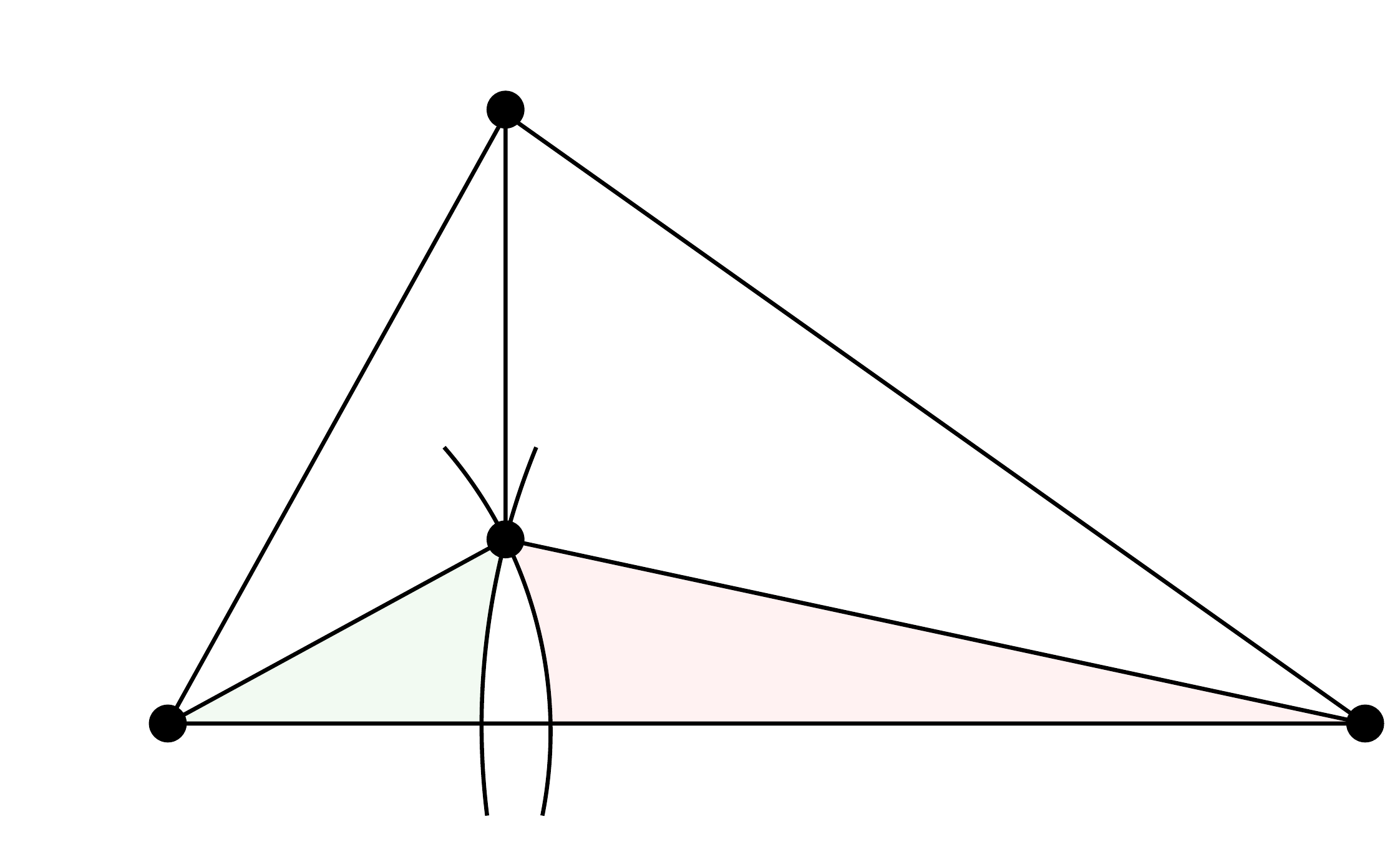_t}}\label{fig:nocrossingb}}
\caption{}
\end{center}\end{figure}

\begin{lemma}\label{lemmarcond}  Given a graph $G=\left(V,E\right)$ and three anchors $A_1, A_2, A_3$. We define the subgraph $\widetilde G=\left(V, \widetilde E\right)$ of $G$ by $\forall x,y\in V, (xy)\in\widetilde E \Longleftrightarrow  (xy)\in E $ and $ \exists k \in \{1,2,3\}$ such that $y=min_k\{z\mid x~\widetilde <_k~ z\}$ or $ x=min_k\{z\mid y~\widetilde <_k~ z\}.$
If the lengths of all the edges in the resulting graph $\widetilde G=\left(\widetilde V, \widetilde E\right)$ are smaller than $2r/\sqrt{5}\approx 0.8944$ then the graph $\widetilde G$ is planar.
\end{lemma}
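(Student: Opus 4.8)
The plan is to derive planarity from Lemma~\ref{lemma:planarcond}, applied to $\widetilde G$ with the communication radius taken to be $2r/\sqrt{5}$ in place of $r$. This is legitimate: every edge of $\widetilde G$ is an edge of $G$, hence has length at most $r$, and by hypothesis every edge of $\widetilde G$ has length strictly below $2r/\sqrt{5}$, so $\widetilde G$ is a graph of the required type; and since the proof of Lemma~\ref{lemma:planarcond} only ever inspects pairs of edges that are actually present, it applies to the subgraph $\widetilde G$ verbatim. Substituting $r\leftarrow 2r/\sqrt{5}$ turns the threshold $\sqrt{5}r/2$ of that lemma into $\sqrt{5}\cdot(2r/\sqrt{5})/2=r$, so it will suffice to check: for every edge $(xy)\in\widetilde E$ and every $z\in V\setminus\{x,y\}$ there is an index $k$ with $x<_k z$ and $y<_k z$ --- I will in fact prove this for all such $z$, which a fortiori covers the vertices at distance below $r$ from both $x$ and $y$, the only ones Lemma~\ref{lemma:planarcond} asks about. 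I expect this reduction to be the only point where the edge-length bound $2r/\sqrt{5}$ genuinely enters: its role is to keep the vertices $z$ that need to be examined within communication range, so that the planarity check (and hence the construction) stays local.

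To carry out the verification, fix an edge $(xy)\in\widetilde E$. Since the conclusion ``there is $k$ with $x<_k z$ and $y<_k z$'' is symmetric in $x$ and $y$, the definition of $\widetilde E$ lets me assume there is an index $k$ with $y=min_k\{w\mid x\widetilde<_k w\}$; in particular $x\widetilde<_k y$, so $x<_k y$ while $y<_j x$ for both indices $j\ne k$. Now fix $z\notin\{x,y\}$ and distinguish three cases according to the total order $<_k$. If $x<_k z$ and $y<_k z$, the index $k$ works directly. If $z<_k x$, then by Lemma~\ref{intervide} the intersection $\bigcap_{m=1}^{3}<_m|_{\cal A}$ is empty (all nodes lie in $\cal A$), so $z<_m x$ cannot hold for all three $m$; together with $z<_k x$ this produces an index $j\ne k$ with $x<_j z$ (strict, by the non-degeneracy of the coordinates), and since $j\ne k$ we also have $y<_j x$, whence $y<_j x<_j z$ and $j$ works. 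If $x<_k z<_k y$, then $x\widetilde<_k z$ must fail: otherwise $z$ would belong to $\{w\mid x\widetilde<_k w\}$ and satisfy $z<_k y$, contradicting the minimality of $y$; since $x<_k z$ holds, the failure of $x\widetilde<_k z$ forces some $j\ne k$ with $x<_j z$, and again $y<_j x<_j z$, so $j$ works.

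All three cases supply the desired index, so Lemma~\ref{lemma:planarcond} yields that $\widetilde G$ is planar. I expect the only real care needed is in getting this case split right and in recognising that Lemma~\ref{intervide} (for the case $z<_k x$) and the minimality built into the definition of $\widetilde E$ (for the case $x<_k z<_k y$) are exactly what make the argument close --- there is no genuine geometric obstacle in the combinatorial core. One could alternatively phrase the conclusion by observing that the argument just given shows $\widetilde G$ to be a subgraph of $G^{Schnyder}_{<_1,<_2,<_3}$, which is planar by Lemma~\ref{intervide} together with Schnyder's theorem; going through Lemma~\ref{lemma:planarcond} is preferable here only because it keeps the locality of the planarity test explicit.
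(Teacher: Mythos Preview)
Your proof is correct and follows the same route as the paper: rescale the communication radius to $2r/\sqrt{5}$ so that the threshold in Lemma~\ref{lemma:planarcond} becomes~$r$, and then verify the hypothesis of that lemma from the minimality clause in the definition of $\widetilde E$. Your three-case split is in fact a more careful version of the paper's one-line argument (``if such a $z$ exists, we would have $(xz)$ instead of $(xy)$''), and your explicit appeal to Lemma~\ref{intervide} in the case $z<_k x$ fills a step the paper leaves implicit.

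One caution on the side remark. Your claim to establish the Schnyder condition for \emph{every} $z\in V\setminus\{x,y\}$, and hence your closing observation that $\widetilde G\subseteq G^{Schnyder}_{<_1,<_2,<_3}$, relies on reading the minimum $\min_k\{w\mid x\,\widetilde<_k\,w\}$ as ranging over all of $V$. The paper intends this minimum to be taken only over the communication neighbours of $x$ (this is what makes the construction local and why virtual links are introduced afterwards); indeed, right after Lemma~\ref{lemmarcond} the paper states explicitly that $\widetilde G$ is \emph{not} a subgraph of $G^{\cal S}$. Under that reading, your case~3 argument (``$z$ would belong to the set and contradict the minimality of $y$'') only works when $z$ is a neighbour of $x$. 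This does not damage your main proof, since Lemma~\ref{lemma:planarcond} only asks about vertices $z$ with $|xz|\le r$, which \emph{are} neighbours of $x$; but the ``for all $z$'' claim and the alternative Schnyder route should be dropped.
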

\begin{proof}
Let $(xy)$ be an edge of $\widetilde G$. Since by hypothesis, all edges of $\widetilde G$ are of length at most $2r/\sqrt{5}$, it can be seen as a subgraph of a Disk Graph with radius $2r/\sqrt{5}$. To apply Lemma \ref{lemma:planarcond}, it is sufficient to check that $\forall z\not\in \{x,y\}$ with $\max(|xz|,|yz|\leq \sqrt{5}/2 * 2r/ \sqrt{5} = r$, there exists $k\in \{1,2,3\}$ such that $x,y~<_k~z$. 
Only, if such a $z$ exist, in our construction of $\widetilde G$, we would not have the edge $(xy)$, but instead, we would have an edge $(xz)$, a contradiction.

\end{proof}

\begin{remark}
The selection procedure of the edges naturally induces an orientation. Hence the obtained graphs may be seen as digraphs. We will use this remark in Section~\ref{greedy}.
\end{remark}

$\widetilde G$ is not a subgraph of $G^{Schnyder}_{\widetilde<_1,\widetilde<_2,\widetilde<_3}$, however, Lemma \ref{lemmarcond} provides a sufficient condition ensuring that the subgraph $\widetilde G$ of the communication graph $G$ is a 
planar graph, see Figure \ref{fig:planargraphs}. The advantage of $\widetilde G$ is that it needs each node to know only its neighbors. It means that it is sufficient that each node broadcasts its identifier and its VRAC coordinates, so that its neighbors know them. It induces a communication complexity of $O(\log(n))$ bits. When the density is high enough, the proposed condition is sufficient. 
However, when the density is low, we observe that there are situations in which considering only the edges of length at most $\frac{2r}
{\sqrt{5}}$ leads to disconnect the graph. To avoid this, a solution is to reconstruct $G^{Schnyder}_{\widetilde<_1,\widetilde<_2,\widetilde<_3}$, but two questions arise: can we  still do it locally ? If it is not a subgraph of the communication graph, how can we detect the missing edges ? For this, the solution that we propose is to introduce {\bf virtual links}. If we are in the situation where $x~\widetilde<_k~ y$ and 
$x~\widetilde<_k ~z$, $z~<_k~y$ but $z$ is out of range of communication of $x$, from Schnyder's theory, $x$ would rather be connected to $z$ than $y$. Only, due to the limited range of $x$, it does not occur. Then, the edge $(xy)$ can potentially cross an edge from $z$ if $|zy|<r$. To avoid this, we replace $(xy)$ by a virtual edge $(xz)$. For this, the node $y$ that knows its neighborhood informs $x$ and a virtual edge between $x$ and $z$ (through $y$) replaces the edges $(xy)$. In turn, $z$ also checks if it is in the same situation as $y$. Ultimately, we would like that the computed graph $\tilde G \prime=(V,\tilde E \prime)$ is a subgraph of the graph induced by the three total orders.

The algorithm to compute the $\tilde G \prime$ goes as follow:
\begin{itemize}
\item (As in Lemma \ref{lemmarcond}) Each node $x$ knows its neighboring nodes and compute the nodes $y_k,~k=1,2,3$ such that $y_k=\min\{z\mid x~\widetilde <_k ~z\}$.
\item (Virtual edge) Each node $x$ checks with its neighboring nodes $y_k,~k=1,2,3$ that there does not exist a node $y_k\prime$ in its second neighborhood such that $y_k\prime~<_k y_k$, $x~\widetilde <_k~ y\prime$ ($d(y\prime,y)<r$) and $y_k\prime$ is out of the communication range of $x$.
\begin{itemize} \item If such a node does not exist the edge $(xy)$ becomes active. 
\item If such a node exists, $y_k$ check with $y_k\prime$ that there does not exist a similar node that is out the range of communication of $y\prime$. This operation is repeated recursively until no node satisfying this property is found and a virtual edge is created between $x$ and the last node found. The original edge $(xy_k)$ is removed.
\end{itemize}
\end{itemize}

\begin{remark}
In the next section, we will see that, using the modified VRAC coordinates, a virtual edge represents a path of length 2. In this case, the recurrence is useless. However, using the original VRAC coordinates, a virtual edge can represent a longer path.
\end{remark}

In the middle of Figure \ref{fig:planargraphs} we plot a communication graph resulting from the selection described in Lemma \ref{lemmarcond} without restricting the lengths of the edges to be smaller than $\frac{2r}{\sqrt{5}}$. We observe that two edges cross. In the right of the Figure \ref{fig:planargraphs} the virtual links mechanism is used. We observe that the crossing is removed and the graph is planar. By comparing with the left side of the figure, we observe that the connectivity of the graph is better with the virtual links. However, when the node density is high enough, the selection of edges of length less than $\frac{2r}{\sqrt{5}}$ is sufficient. Furthermore, when the density increases, the number of virtual edges tends to zero (cf Section \ref{experiments}.

\begin{center}
\begin{figure}
\includegraphics[scale=0.35]{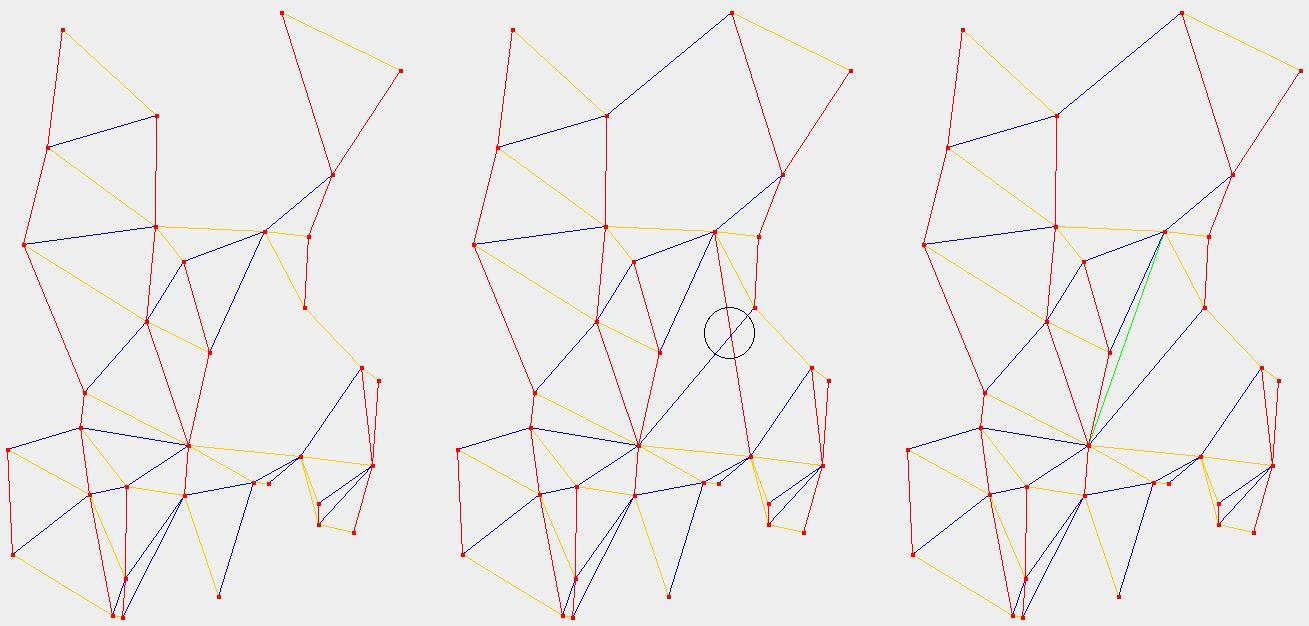}
\caption{On the left the planar graph obtained by considering only edges of length at most $\frac{2r}{\sqrt{5}}$. On the middle the graph obtained by considering all the edges, we observe two edges crossing. On the right the planar graph obtained with virtual links (in green), the crossing is removed.}\label{fig:planargraphs}
\end{figure}
\end{center}

$\tilde G \prime$ may not be a subgraph of $G$, which means that some of its edges may have length greater than $r$. For this, we can not use Lemma \ref{lemma:planarcond}. However, if we can prove that it is a subgraph of $G^{\cal S}$, we would obtain from \cite{schnyder}, that it is planar. In the next section, we slightly change the VRAC coordinate system in order to prove that it is the case. It also allows us to give guarantees on the stretch of $\widetilde G \prime$.

\section{Properties of the planar embedding}\label{greedy}
In this section we discuss a simple extension of the VRAC coordinate system. Using this new coordinate system, we prove that in $\widetilde G\prime$, the distance between two nodes is at most twice the distance in the original graph $G$. Furthermore, we show that a virtual edge $\widetilde e$ corresponds to a path of at most two edges in $G$, and that the length of such an edge is upper bounded by $2r/\sqrt{3}$.


In this section, we make the following hypothesis:
\begin{itemize}
\item There are three anchors $A_1, A_2, A_3$, the nodes belong to the convex hull $\cal A$ of the anchors and they know their distances to all three anchors.
\item $\widehat{A_1A_2A_3}$ is equilateral.
\item the nodes know the distances between the anchors ($| A_1A_2|$, $| A_1A_3|$,$| A_2A_3|$).
\end{itemize}
With respect to the first part of the paper, the two last hypothesis are new. By using the distances between the anchors, each node $x$ can compute the heights of the triangles $\widehat{A_2xA_3}$, $\widehat{A_1 x A_3}$ and, $\widehat{A_1 x A_2}$. As $\widehat{A_1A_2A_3}$ is equilateral, it is equivalent to compute the surface of these triangles or the heights and it is then easy to see that their sum is constant. We denote these values $(x_1,x_2,x_3)$, see Figure \ref{fig:VRACTriangle}. 
 One advantage of this coordinate system with respect to just using the distances to the anchors is that the sum of the three triangle areas is constant, so we can normalize the coordinates such that $x_1+x_2+x_3=1$. A reason for doing this is that because of the measurement errors on the physical location of the nodes, it is likely that the distances to the anchors do not correspond to coordinates inside a same plane. With the normalization, we project the coordinate on a same plane.

\subsection{Adapting results of Section \ref{sec:3} and further}
\setcounter{subsubsection}{0}
\subsubsection{Results of Section \ref{sec:3}}
Using the coordinates defined above, we define the order relations $<_1, <_2, <_3$ and $\widetilde<_1, \widetilde<_2, \widetilde<_3$ the same way we did in Section \ref{sec:3}.
In Section \ref{sec:3}, given a node $x$, the nodes satisfying $y \widetilde >_k x$ were outside the circle centered at $A_k$ of radius $|xA_k|$. With the new definition of the distance function $d$ (c.f. Section \ref{sec:def}), the nodes $y$ satisfying $y \widetilde >_k x$ are contained on the half plane containing $A_k$ defined by the line parallel to $(A_{k\ mod\ 3+1}A_{k\ mod\ 3+1})$ going through $x$, as illustrated in Figure \ref{fig:defg}.
\begin{figure}[htb]
\begin{center}
\subfigure[The new coordinate system: $x=(x_1, x_2, x_3)$ where $x_1,x_2$ and $x_3$ are respectively the heights of the triangles $\widehat{A_2xA_3}$, $\widehat{A_1 x A_3}$ and,$\widehat{A_1 x A_2}$.]{\scalebox{0.3}{\input{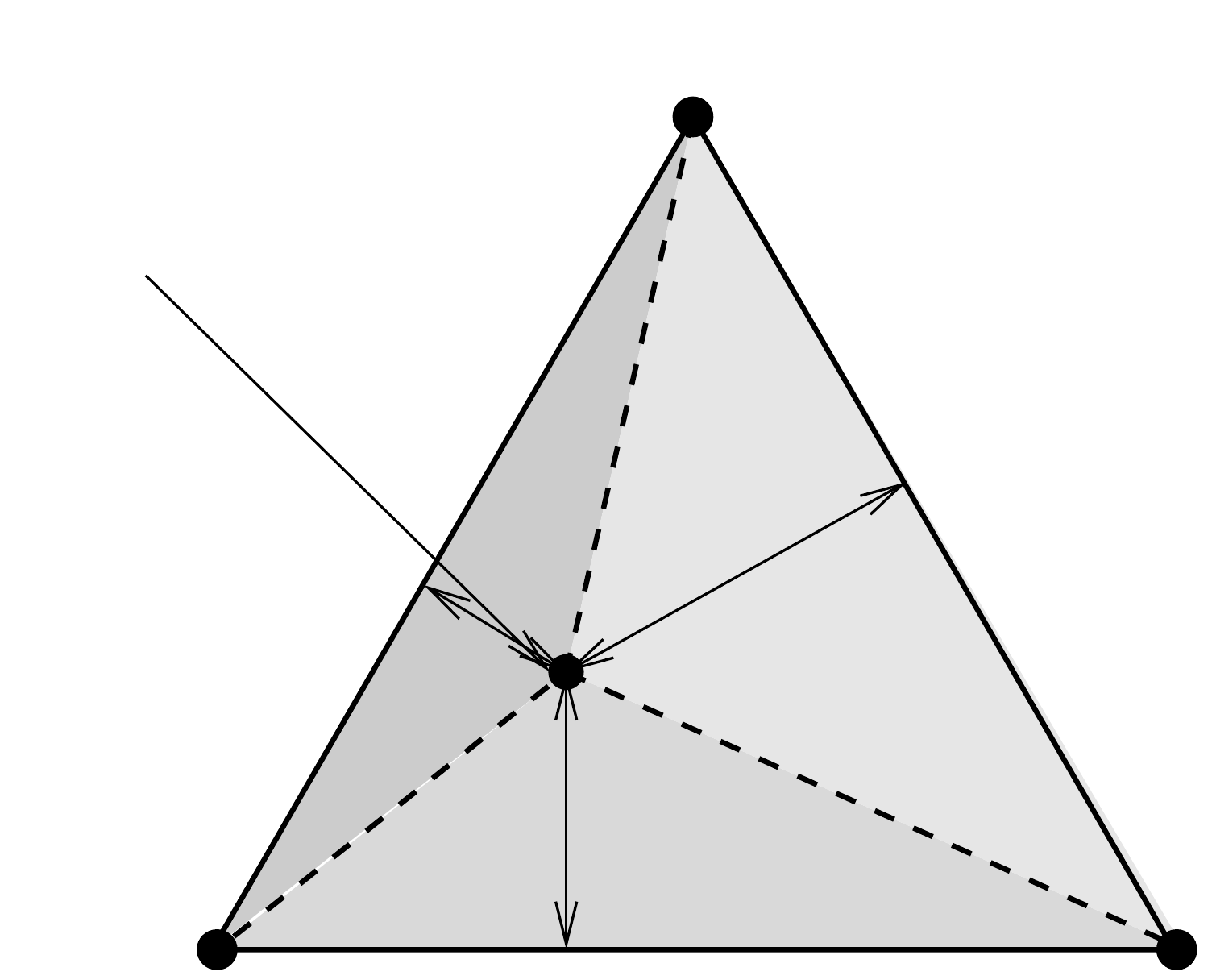_t}}\label{fig:VRACTriangle}}\hspace{2cm}
\subfigure[$y$ with $x\widetilde<_3 y$.]{\scalebox{0.27}{\input{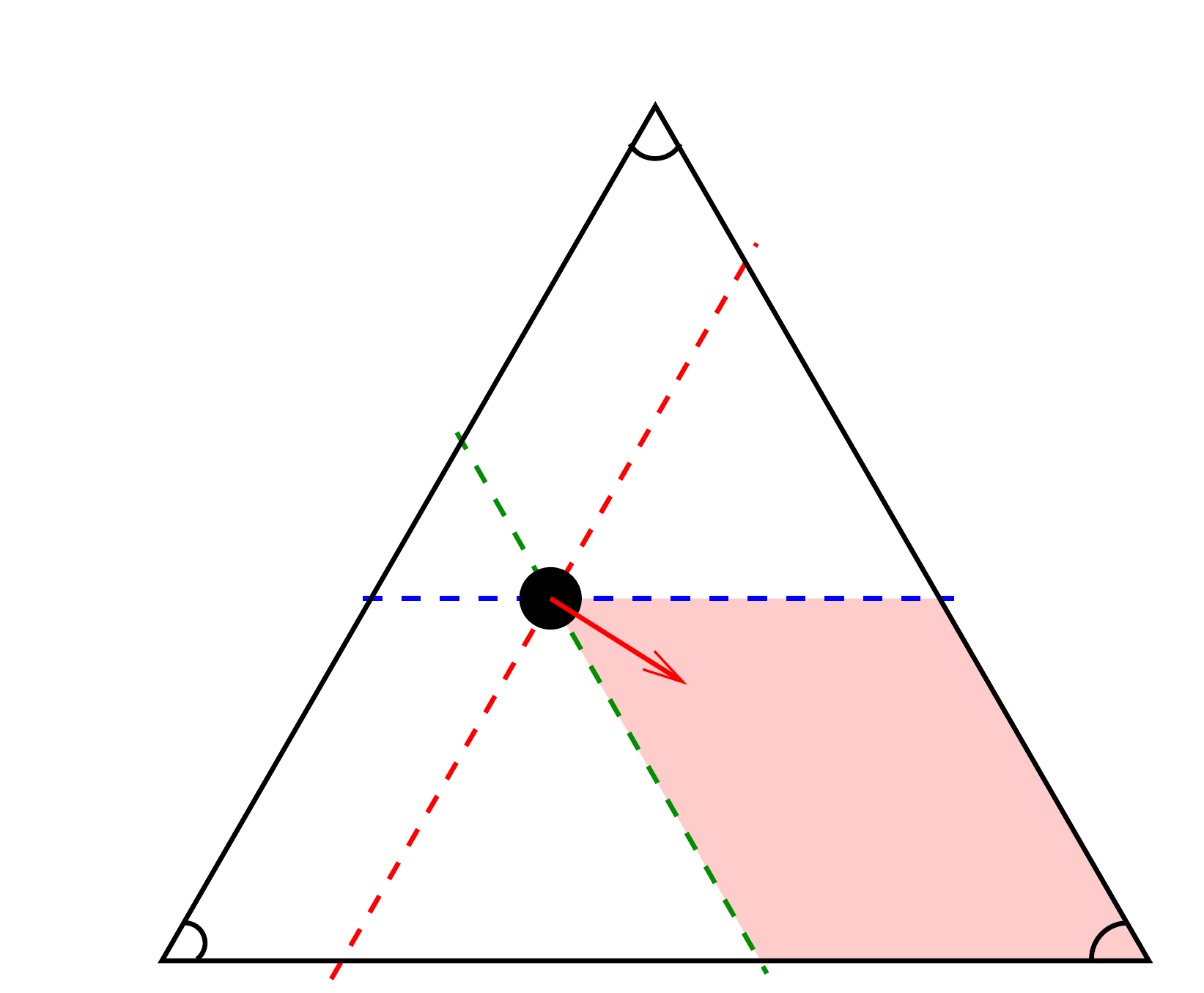_t}}\label{fig:defg}}
\\
\subfigure[New proof of Lemma \ref{lemma:planarcond}]{\scalebox{0.27}{\input{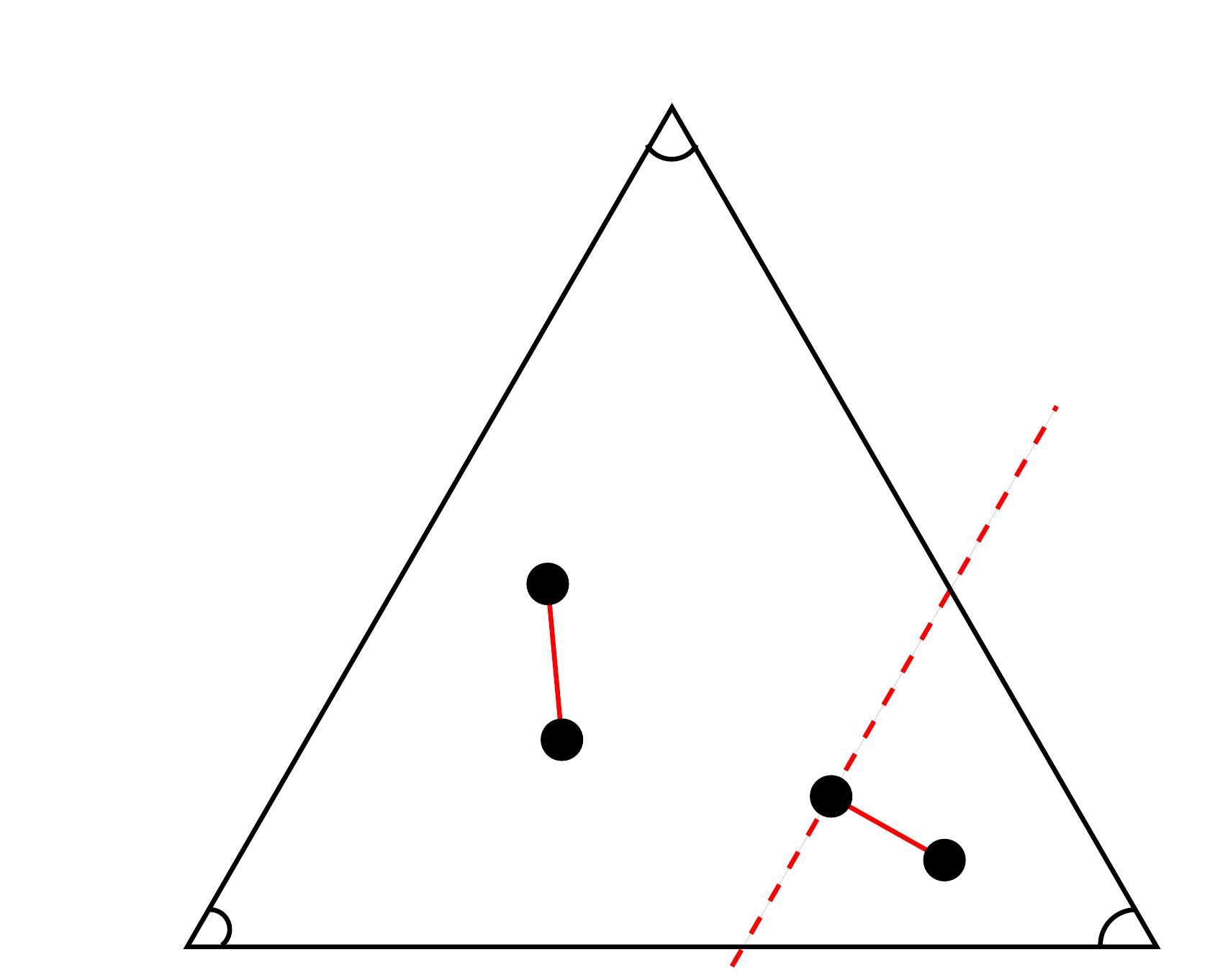_t}}}\hspace{2cm}
\subfigure[Definition of the greedy regions]{\scalebox{0.27}{\input{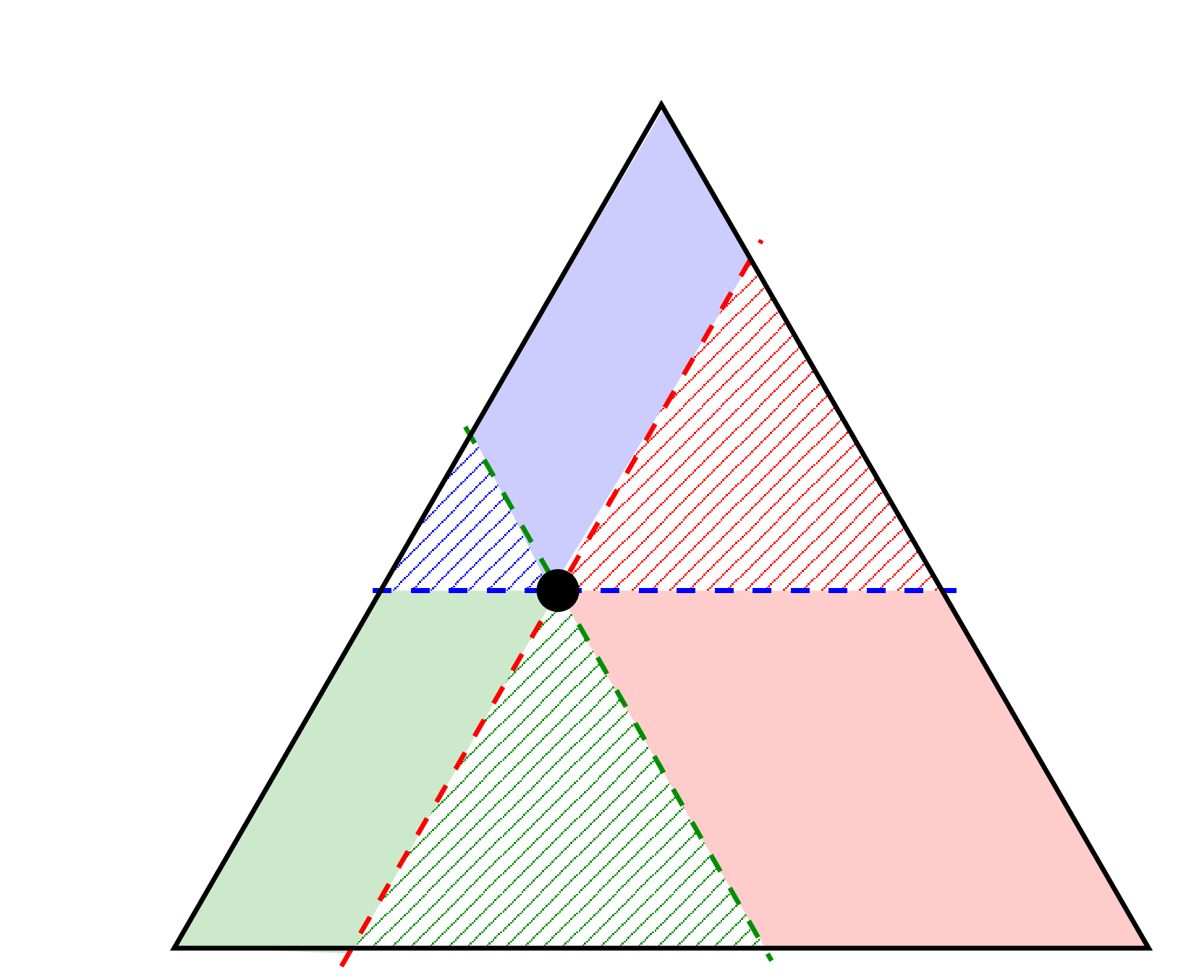_t}}\label{fig:def2}}
\caption{Adapting proofs to the new settings.}
\end{center}\end{figure}
Using this observation, it is easy to see that the intersection of the three order relations is empty, so Lemma \ref{intervide} is still valid in this coordinate system. 
Similarly, Lemma \ref{lemma:planarcond} remains also true and we can adapt the proofs of Lemma \ref{lemmarcond}.

To summarize, {\bf all the results we have proved previously are valid with the new coordinate system}. 

\subsubsection{Connectivity results and stretch}
\ \newline
\indent{\it{Definition}} (Figure \ref{fig:def2}): Given a node $x$, we call the {\bf greedy regions} of $x$ the three regions $A_i^x=\{z\mid x~\widetilde <_i ~z\}$, for $i\in \{1,2,3\}$.

{\it{Definition}} (Figure \ref{fig:def2}): Given a node $x$, for $i\in \{1,2,3\}$, we denote the region between the two regions $A_i^x$ and $A_{i\ mod\ 3 +1}^x$ by $\bar A_i^x$.

\begin{remark}
A node $x$ has at most one outgoing edge towards a node in each of its greedy regions. It has no outgoing edge towards node not in its greedy regions, however, it may have an ingoing edge from any node.
\end{remark}

\begin{lemma}\label{edgeFactor2}
Given an edge $(x,y)\in E$, there is a path $P$ from $x$ to $y$ in $\widetilde G\prime $. The path is contained in either $\{z \in A_i^x| z \leq_i y\}$ or $\{z\in A_i^y | z\leq_i x\}$ for some $i\in \{1,2,3\}$, and it verifies: $\sum_{e \in P} | e| \leq 2|xy|$.
\end{lemma}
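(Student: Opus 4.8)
## Proof Plan for Lemma \ref{edgeFactor2}

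The plan is to prove this by induction on the rank of the edge endpoints in a suitable order, following the structure of the $\widetilde G\prime$ construction. Given an edge $(x,y)\in E$, I first want to locate $x$ and $y$ relative to each other's greedy regions. Since the three order relations $<_1,<_2,<_3$ have empty intersection on $\cal A$ (Lemma \ref{intervide}), applying the $\widetilde<$-refinement to the pair $\{x,y\}$ shows that one of the two nodes, say $y$, lies in some greedy region $A_i^x$ of $x$ (i.e.\ $x~\widetilde<_i~y$), and symmetrically. So without loss of generality assume $x~\widetilde<_i~y$ for some $i\in\{1,2,3\}$, which puts $y$ in the half-plane $A_i^x$. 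The goal is then to build the path $P$ inside $\{z\in A_i^x\mid z\le_i y\}$.

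The core of the argument is a descent. Let $z_1=\min_i\{z\mid x~\widetilde<_i~z\}$ be the node that $x$ selects in direction $i$ during the construction (this node exists because $y$ itself is a candidate). By construction there is either a real edge $(x,z_1)$ or a virtual edge $(x,z_1)$ in $\widetilde G\prime$ (and in the virtual case $z_1$ is the node reached at the end of the recursive search, still satisfying $z_1\le_i y$ since $y$ was a candidate with $(x,y)\in E$). In all cases $z_1~\widetilde<_i~y$ or $z_1=y$, and $z_1\in A_i^x$ with $z_1\le_i y$. If $z_1=y$ we are done; otherwise we observe that $(z_1,y)$ still behaves like an edge for the purposes of continuing — here I must check that $z_1$ is close enough to $y$ that the same selection mechanism applies, which is where the quantitative length bounds of Section \ref{greedy} (a virtual edge has length $\le 2r/\sqrt3$, and real edges have length $\le r$) come in. Iterating, I get a sequence $x, z_1, z_2, \dots$ strictly decreasing in $<_i$ (since each $z_{j+1}~\widetilde<_i~z_j$ would violate minimality — actually each step produces $z_{j+1}$ with $z_j~\widetilde<_i~z_{j+1}$, wait, I need $z_{j+1}$ strictly between $z_j$ and $y$ in $<_i$), hence the process terminates at $y$. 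Concatenating the edges gives the path $P\subseteq\{z\in A_i^x\mid z\le_i y\}$.

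For the length bound $\sum_{e\in P}|e|\le 2|xy|$, the key geometric fact is that the coordinate $x_i$ (the height of triangle $\widehat{A_{i+1}xA_{i+2}}$, or equivalently, up to the equilateral normalization, a linear function of position) is strictly monotone along $P$: each successive node has a strictly larger $i$-th coordinate, capped by $y_i$. Since all nodes of $P$ lie in the strip between the line through $x$ parallel to $(A_{i+1}A_{i+2})$ and the parallel line through $y$, and within the region $A_i^x$, the total Euclidean length of $P$ is controlled by the width of this strip together with the bounded "lateral" spread. Concretely, I would project onto the direction perpendicular to $(A_{i+1}A_{i+2})$: the projections of the path vertices are monotone and span exactly $|y_i - x_i|$, which is at most the projection of $|xy|$; the transverse component is bounded using the fact that consecutive path edges stay within the greedy cone, so their directions are confined to an angular sector of width less than $\pi/3$, giving $\sum|e|\le \frac{1}{\cos(\pi/6)}\cdot(\text{something})$ — and I then need to show this something, combined with the cone constraint, yields the clean factor $2$.

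The main obstacle I anticipate is precisely this last length estimate: bounding $\sum_{e\in P}|e|$ by $2|xy|$ requires showing that the path, while it may zig-zag laterally, cannot accumulate too much length — the monotonicity in one coordinate alone only bounds one component of the sum. The cleanest route is probably to show that each edge $e=(z_j,z_{j+1})$ makes an angle of at most $\pi/3$ (or some fixed acute angle) with the fixed direction orthogonal to $(A_{i+1}A_{i+2})$, because both endpoints lie in $A_i^x$ and $z_{j+1}\in A_i^{z_j}$; then $|e|\le \frac{1}{\cos(\pi/3)}\bigl((z_{j+1})_i-(z_j)_i\bigr)\cdot c = 2c\bigl((z_{j+1})_i-(z_j)_i\bigr)$ for the appropriate scaling constant $c$ relating coordinate-differences to projected Euclidean distance, and telescoping gives $\sum|e|\le 2c\,(y_i-x_i)\le 2|xy|$ since $c\,(y_i-x_i)$ is exactly the projection of $\overrightarrow{xy}$ onto that direction, which is $\le|xy|$. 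Verifying that the relevant angle is indeed at most $\pi/3$ — using the equilateral geometry and the definition of the greedy regions — is the delicate step and the place where the equilateral hypothesis of Section \ref{greedy} is essential.
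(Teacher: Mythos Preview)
Your descent and your length bound both rest on the same unjustified claim: that at every step $z_{j+1}\in A_i^{z_j}$ for the \emph{same} index $i$. This fails in general. After the first hop $x\to z_1$ with $z_1=\min_i\{z\mid x~\widetilde<_i~z\}$, the node $y$ need not lie in $A_i^{z_1}$: for instance if $y$ sits on one boundary ray of the cone $A_i^x$ and $z_1$ sits near the other, then from $z_1$'s viewpoint $y$ falls into $\bar A_j^{z_1}$ or $A_{j}^{z_1}$ for some $j\neq i$. So your iteration ``keep taking $\min_i$'' does not necessarily progress towards $y$, and the edges of the actual path in $\widetilde G'$ are not all oriented along direction $i$. (You half-notice this in your parenthetical ``wait, I need \dots'', but the issue is not that $z_1$ is too far from $y$ --- it is that the direction can switch.)

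This also breaks your length argument. If $z_{j+1}\in A_i^{z_j}$ really held, the half-angle of that cone is $\pi/6$ (not $\pi/3$), and your telescoping would give $\sum|e|\le \frac{1}{\cos(\pi/6)}(y_i-x_i)=\tfrac{2}{\sqrt3}(y_i-x_i)\le \tfrac{2}{\sqrt3}|xy|$, i.e.\ stretch $\tfrac{2}{\sqrt3}\approx1.15$. But the bound $2$ is tight for the half-$\theta_6$ graph (take $y$ at a corner of the cone $A_i^x$ and $z_1$ at the opposite corner: the two-edge path has length exactly $2|xy|$, and the second edge is orthogonal to the direction-$i$ axis, so its angle is $\pi/2$, not $\pi/3$). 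So monotonicity in a single coordinate cannot give the bound.

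The paper's proof repairs both points. For existence it inducts on $|xy|$: with $x'=\min_i\{z\mid x~\widetilde<_i~z\}$ one shows geometrically (law of sines in the equilateral cone) that $|x'y|<|xy|$, so induction applies to $(x',y)$ even though $y$ may now lie in a different greedy region of $x'$; the containment $P\subseteq\{z\in A_i^x\mid z\le_i y\}$ is maintained because $x'$ already lies in that triangle. For the length bound the paper establishes monotonicity in \emph{two} coordinates along $P$ --- coordinate $i$ increases from $x_i$ to $y_i$, and some coordinate $j\in\{1,2,3\}\setminus\{i\}$ decreases from $x_j$ to $y_j$ --- and then uses that any path monotone in two barycentric coordinates has length at most $\tfrac{2}{\sqrt3}\bigl(|x_i-y_i|+|x_j-y_j|\bigr)$, which is compared to $|xy|^2=\tfrac{4}{3}\bigl(|x_i-y_i|^2+|x_j-y_j|^2-|x_i-y_i||x_j-y_j|\bigr)$ to get the factor $2$. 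The second monotone coordinate is exactly the missing ingredient in your plan.
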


\begin{proof}
Given an edge $(x,y)\in E$, without loss of generality, we can suppose $y \in A_1^x$. $(x,y)\in E$. By hypothesis, $|xy| \leq r$.

We prove the lemma by induction on the length of $xy$\footnote{The length of an edge takes value in $I\hspace{-1mm}R$, however there are a finite number of edges (upper bounded by $n^2$), hence our induction will terminate.}. By our definition and Lemma \ref{lemmarcond}, $(x,y)\in \widetilde E \prime$ iff $y=min_1\{z\mid x~\widetilde <_1~ z\}$.

If $y=min_1\{z\mid x~\widetilde <_1~ z\}$, then $(x,y) \in \widetilde E \prime$ and there is a (direct) path between $x$ and $y$.

If not, there is $x'$ with $x' \widetilde <_1~ y$ and $x'=min_1\{z\mid x~\widetilde <_1~ z\}$. Notice that $x' \in \{z \in A_i^x|z \leq_i y\}$.

We now prove that $|x'y| < |xy|$.

\begin{figure}
\begin{center}
\subfigure[]{\scalebox{0.3}{\input{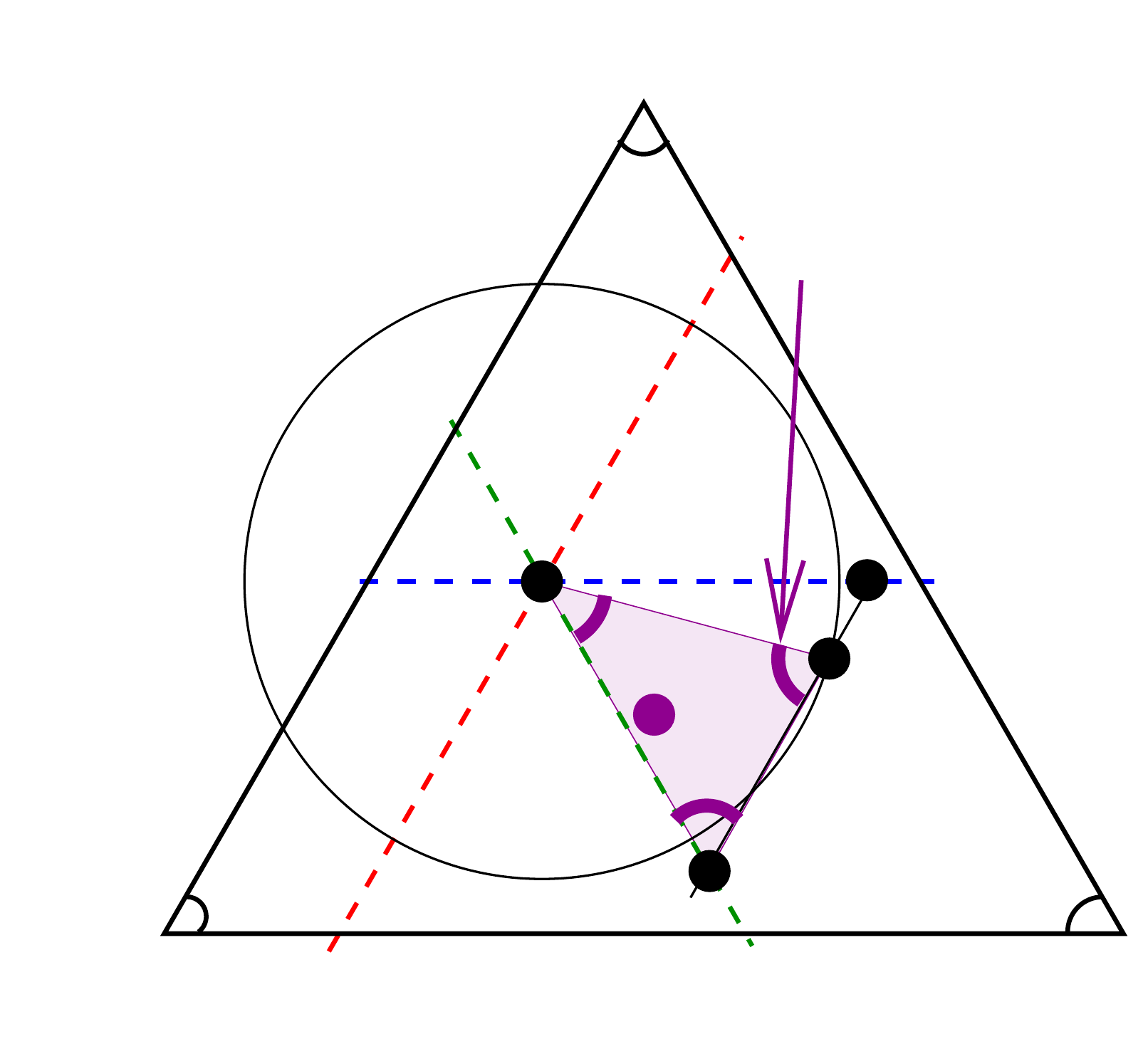_t}\label{fig:connectivitya}}}\hspace{3cm}
\subfigure[]{\scalebox{0.3}{\input{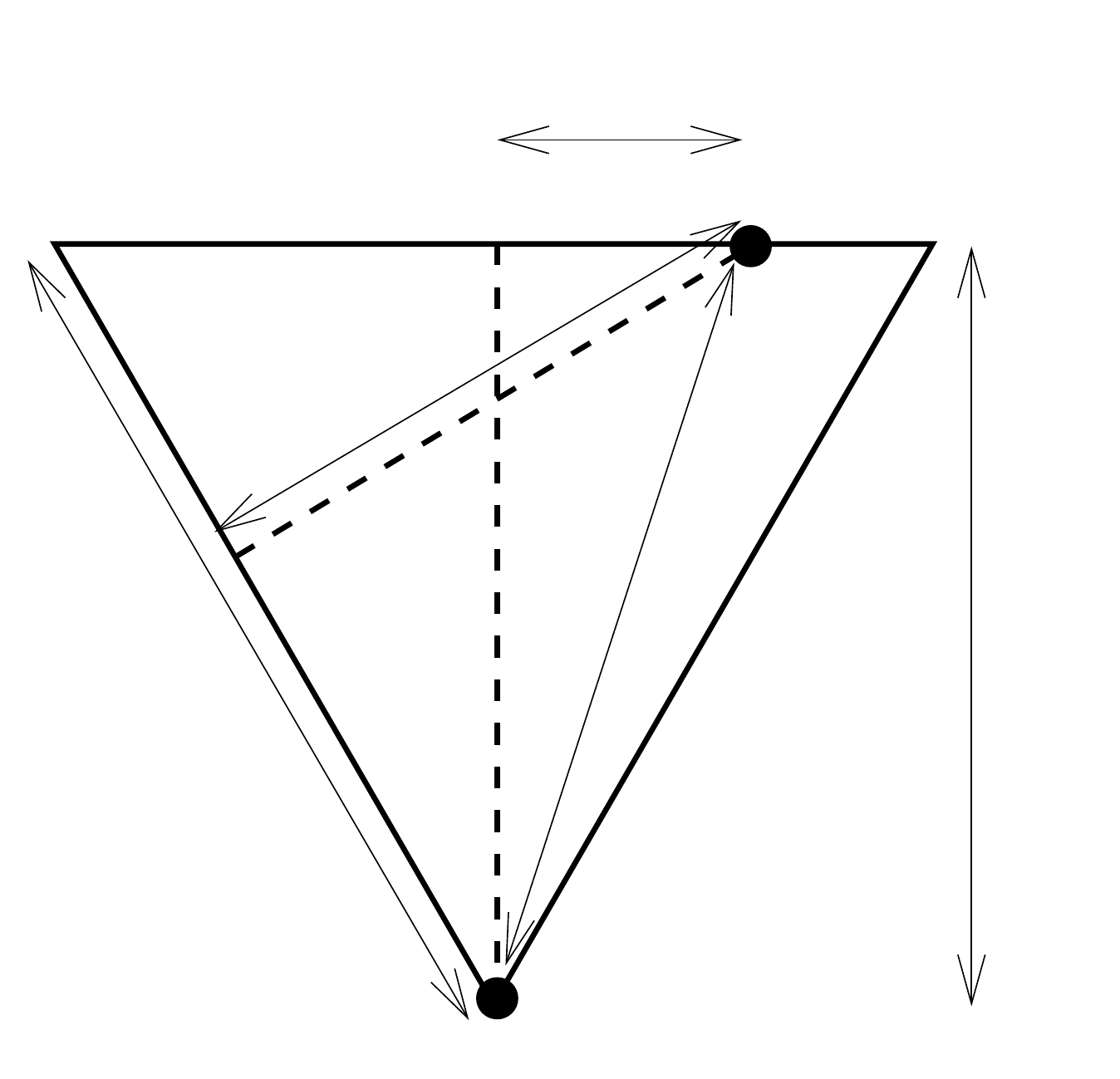_t}\label{fig:connectivityb}}}
\caption{$|x'y| < |xy|$ and $\sum_{e \in P} |e \leq 2|xy||$}\label{fig:connectivity}
\end{center}\end{figure}

Using the notation of Figure \ref{fig:connectivitya}, we obtain that $|x'y|$ is maximum for $x'=x$, $x'=c_1$ or $x'=c_2$. We hence have to prove $|c_2y| \leq |xy|$. The case $|c_1y| \leq |xy|$ is symmetric.

In the purple triangle of Figure \ref{fig:connectivitya}, we have $\frac{|c_2y|}{sin(2\pi/3-\beta)}=\frac{|xy|}{sin(\pi/3)}$. So $\frac{|c_2y|}{|xy|}=\frac{sin(2\pi/3-\beta)}{sin(\pi/3)}$.
As $\beta \geq \pi/3$, we have $\frac{|c_2y|}{|xy|} \leq 1$.
So $|c_2y| \leq |xy|$.

Recall that we supposed that no two nodes in the network have the same coordinates compared to a given anchor. It means that $x'$ can neither be $c_2$ nor $c_1$\footnote{Because $d(x,A_2)=d(c_1,A_2)$ and $d(x,A_3)=d(c_2,A_3)$}. We hence have $|x'y| < \max(|xy|,|c_1y|,|c_2y|) \leq |xy| \leq r$.

So we can apply the induction on the edge $(x'y)$ which is in $E$ and strictly shorter than $(xy)$ and this proves that there is a path between $x$ and $y$ through $x'$. 

Let us now look at the stretch factor. We have $x <_1 x'$, $y>_1 x'$ and there is a path $y^1=x', y^2,\ldots, y^l=y$ in the graph $\widetilde G$ that satisfies $y^1~<_1 y^2 <_1 \ldots <_1 y^l=y$ by construction of the virtual edge. Because the coordinate with respect to $A_1$ increases(monotonically)  along the path we have that $| x_1-y^1_1| + | y^2_1-y^3_1| + \ldots + | y^{l-1}_1-y^l_1| = | x_1-y_1|$ (the subscript indicates the coordinate with respect to $A_1$).

If $x'\in A_i^y$ for $i\in\{2,3\}$, we have $y <_i x'$ and $y >_1 x'$. By induction, if such a $i$ exists, the rest of the path will be in $ A_i^y$ and the $i$th-coordinate decreases (monotonically) along the path. If not, $x'\in \bar A_1^y$ and $y >_1 y'$, $y<_2 x'$ and $y<_3 x'$ and both coordinates with respect to $A_2$ and $A_3$ decreases along the path. This proves that in all cases, we have that there exists $i\in\{2,3\}$ such that $ | x_i - y_i^1| + | y^{1}_i-y^2_i|+\ldots + | y^{l-1}_i-y^{l}_i|  = | x_i-y_i|$.

In summary, along the path $P$, for $z\in P$, $z_1$ increases from $x_1$ to $y_1$ and there is an $i\in\{2,3\}$ such that $z_i$ decreases from $x_i$ to $y_i$. The distance covered in $A_1^x$ to go from a node with $i^{th}$ coordinate $x_i$ to a node with $i^{th}$ coordinate $y_i$ is upper-bounded by $\frac{2}{\sqrt{3}}|x_i-y_i|$, see Lemma \ref{length}, since we suppose that the anchors form an equilateral triangle. 

From this we deduce that the length of the path $P$ is upper bounded by $\frac{2}{\sqrt{3}}|x_1-y_1|+\frac{2}{\sqrt{3}}|x_i-y_i|$, this longest path is obtained by moving along the path where the $i$th coordinate is constant and the first coordinate goes from $x_1$ to $y_1$ and then along the path where the $i$th coordinate goes from $x_i$ to $y_i$ and the first one is constant.

We now express $|xy|$ in terms of $|x_1-y_1|$ and $|x_i-y_i|$. By the configuration of the different triangles, c.f. Figure \ref{fig:connectivityb}, we have $|xy|^2=|x_1-y_1|^2 + (\frac{2}{\sqrt{3}}|x_i-y_i| - \frac{1}{\sqrt{3}}|x_1-y_1|)^2=\frac{4}{3}(|x_i-y_i|^2+|x_1-y_1|^2-|x_i-y_i||x_1-y_1|)$.

Hence the stretch factor $c$ verifies: $c^2=\frac{(\frac{2}{\sqrt{3}}|x_1-y_1|+\frac{2}{\sqrt{3}}|x_i-y_i|)^2}{\frac{4}{3}(|x_i-y_i|^2+|x_1-y_1|^2-|x_i-y_i||x_1-y_1|)}\leq 4$.

So we have $c\leq 2$ as claimed.
\end{proof}

\begin{corollary}
$\widetilde G\prime $ is a subgraph of $G^{\cal S}$ which is equal to the half-$\theta_6$ graph.
\end{corollary}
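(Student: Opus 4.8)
The plan is to show that $\widetilde G\prime$ satisfies the defining incidence condition of the half-$\theta_6$ graph, using the geometry of the modified VRAC coordinates described in Section \ref{greedy}, and then invoke the known equality between the half-$\theta_6$ graph and $G^{\cal S}$ (quoted in the introduction from \cite{review}). Recall from the discussion following Figure \ref{fig:defg} that with the new distance function the region $A_k^x = \{z \mid x \widetilde<_k z\}$ is exactly an open half-plane bounded by the line through $x$ parallel to the side $A_{k+1}A_{k+2}$ of the equilateral triangle; hence the three regions $A_1^x, A_2^x, A_3^x$ together with the three ``between'' regions $\bar A_1^x, \bar A_2^x, \bar A_3^x$ are precisely the six $60^\circ$ cones at $x$ that define the half-$\theta_6$ graph. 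So the content of the corollary is: for every $k$, the edge of $\widetilde G\prime$ leaving $x$ into cone $A_k^x$ goes to the node of that cone that is \emph{closest to $x$ along the direction $A_k$}, which is exactly the half-$\theta_6$ selection rule.

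First I would unwind the construction. By Lemma \ref{lemmarcond} and the definition of $\widetilde G\prime$, the (directed) edge out of $x$ into region $A_k^x$ is $(x, y_k)$ with $y_k = \min_k\{z \mid x\widetilde<_k z\}$, possibly redirected to a virtual endpoint $y_k'$ with $y_k' <_k y_k$ obtained by iterating the ``virtual link'' step. In either case the endpoint $w$ satisfies $x \widetilde<_k w$ (so $w \in A_k^x$), and crucially $w$ minimizes the $k$-th coordinate among all candidates considered. Since in the new coordinates ``smaller $k$-th coordinate'' means ``closer to the line through $x$ parallel to $A_{k+1}A_{k+2}$'', which is the Euclidean direction used to order points inside a $\theta_6$-cone, the endpoint $w$ is the \emph{$\theta_6$-nearest} node of $\widetilde G\prime$'s candidate set inside that cone. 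I would then argue that the candidate set is not artificially restricted: the virtual-link recursion precisely ensures that whenever some node $z$ with $z<_k y_k$ lies in $A_k^x$ but out of communication range, $x$'s edge is rerouted toward it, so no node of $A_k^x$ with smaller $k$-th coordinate than the chosen endpoint is skipped. This gives the half-$\theta_6$ incidence condition, hence $\widetilde G\prime \subseteq$ half-$\theta_6$ graph $= G^{\cal S}$, and in particular $\widetilde G\prime$ is planar by \cite{schnyder}.

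Alternatively — and this is probably the cleaner route to write — I would deduce the corollary directly from Lemma \ref{edgeFactor2} rather than re-examining the construction: that lemma already shows that for \emph{every} $(x,y)\in E$ there is a path in $\widetilde G\prime$ staying inside one cone $A_i^x$ (or $A_i^y$) with the $i$-th coordinate moving monotonically toward $y$. Applying this with $y$ ranging over candidates shows that the first edge of such a path out of $x$ cannot ``jump over'' a closer node in the same cone, which is the half-$\theta_6$ property. I would then just cite the result of \cite{review} that the half-$\theta_6$ graph, the TD-Delaunay graph and the geodesic embedding coincide, giving equality with $G^{\cal S}$.

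The main obstacle I anticipate is bookkeeping around the virtual edges and the possibility that $\widetilde G\prime$ is a \emph{proper} subgraph of the half-$\theta_6$ graph (the authors themselves stress $\widetilde G\prime$ need not equal $G^{\cal S}$). One must be careful that ``subgraph'' is all that is claimed: a virtual edge $(x, y_k')$ should be matched against the half-$\theta_6$ edge out of $x$ in the same cone, and one needs that $y_k'$ is either that very node or lies ``behind'' it on the same path, so that no \emph{extra} edge outside the half-$\theta_6$ graph is created. Making precise that the iterated redirection never overshoots — it stops exactly at the half-$\theta_6$ neighbor or before it — is the delicate point; everything else is a direct translation of the cone geometry established in Figures \ref{fig:defg} and \ref{fig:def2}.
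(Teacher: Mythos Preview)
Your second route—deriving the corollary from Lemma~\ref{edgeFactor2}—is precisely what the paper does, and your identification of the cone geometry (the regions $A_k^x$ are the half-$\theta_6$ cones) is correct. So the overall strategy matches the paper.

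There is, however, a real confusion in your last paragraph that would make the argument fail as written. You say one needs that $y_k'$ ``is either that very node or lies `behind' it on the same path'' and that the delicate point is that the redirection ``stops exactly at the half-$\theta_6$ neighbour or before it.'' Allowing ``before it'' is fatal: if $w\in A_k^x$ is \emph{not} the global $<_k$-minimum of $A_k^x$, then $(x,w)\notin G^{\mathcal S}$. Indeed, take $z\in A_k^x$ with $z<_k w$; then $z<_k w$ and $z<_j x$ for every $j\neq k$, so no index $j$ satisfies both $x<_j z$ and $w<_j z$, and the Schnyder incidence condition fails. Thus for the subgraph inclusion you must show that the endpoint is \emph{exactly} the global cone-minimum, not merely that the recursion does not overshoot.

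What closes this is the geometric fact established inside the proof of Lemma~\ref{edgeFactor2} (the argument around Figure~\ref{fig:connectivitya}): every point of the triangle $\{z\in A_k^x:\ z\le_k y_k\}$ lies at Euclidean distance at most $|xy_k|\le r$ from $y_k$. Hence any $z\in A_k^x$ with $z<_k y_k$ is an $E$-neighbour of $y_k$, so the single virtual-link step $y_k\mapsto y_k'=\min_k\{y\in A_k^x:\ |y_ky|<r\}$ already reaches the global minimum of $A_k^x$; this is also why Lemma~\ref{length} bounds virtual links by a path of length two. Your first route asserts ``no node \ldots\ is skipped'' without invoking this fact, and your second route, as phrased, only applies Lemma~\ref{edgeFactor2} to pairs $(x,y)\in E$, which controls $E$-neighbours of $x$ in the cone but not arbitrary vertices of $A_k^x$. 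Inserting the triangle observation repairs both versions and brings your argument in line with the paper's. The properness of the inclusion $\widetilde G'\subsetneq G^{\mathcal S}$, which you flag as an obstacle, is harmless here: it arises only when $A_k^x$ is nonempty but contains no $E$-neighbour of $x$, in which case $\widetilde G'$ simply has no outgoing edge in that cone.
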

\begin{proof}
The definition of the half-$\theta_6$ graph gives the same graph as $G^{\cal S}$ when using the three total orders using the modified VRAC coordinates. From the previous lemma, we get that if there is an edge $(xy)\in \tilde E\prime$, with $y$ in a greedy region $A_k^x$, then $y$ is minimum according to $\tilde <_k$. Hence, $(xy)$ is an edge of $G^{\cal S}$. However, there are examples in which $\widetilde G\prime \neq G^{\cal S}$
\end{proof}

From this corollary, we immediately obtain that $\widetilde G\prime$ is planar. When $\tilde G\prime=G^{\cal S}$, we also deduce that it is geometric 2-spanner. However, in the general case, we can not deduce any information on its stretch, which is the object of the next theorem, which is implied by Lemma \ref{edgeFactor2}.

\begin{theorem}\label{th:2spanner}
Given a connected graph $G$, the graph $\widetilde G \prime$ is planar, and for any two nodes $x$ and $y$, if there is a path of length $\ell$ from $x$ to $y$ in $G$, there is one of length at most $2\ell$ in $\widetilde G \prime$.
\end{theorem}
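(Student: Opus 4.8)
The plan is to derive Theorem~\ref{th:2spanner} as a direct consequence of Lemma~\ref{edgeFactor2} applied edge by edge along a shortest path in $G$. Planarity of $\widetilde G\prime$ is already established (it follows from the corollary stating that $\widetilde G\prime$ is a subgraph of $G^{\cal S}$, hence of a planar graph, combined with Lemma~\ref{lemmarcond} / Lemma~\ref{lemma:planarcond} for the part of the argument handling real edges). So the real content is the stretch bound, and I would first dispose of planarity in one sentence by citing the corollary.

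For the stretch bound, let $x=v_0, v_1, \dots, v_\ell=y$ be a path of length $\ell$ in $G$, so each $(v_{i-1}v_i)\in E$. By Lemma~\ref{edgeFactor2}, for each $i\in\{1,\dots,\ell\}$ there is a path $P_i$ in $\widetilde G\prime$ from $v_{i-1}$ to $v_i$ with $\sum_{e\in P_i}|e| \le 2|v_{i-1}v_i| \le 2$. Concatenating $P_1, P_2, \dots, P_\ell$ yields a walk in $\widetilde G\prime$ from $x$ to $y$ whose total length is at most $\sum_{i=1}^\ell 2|v_{i-1}v_i| = 2\sum_{i=1}^\ell |v_{i-1}v_i|$. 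If we take $v_0,\dots,v_\ell$ to be a shortest path in $G$ (counting hops, or better, one realizing the path length $\ell$ measured as number of edges each of Euclidean length $\le r=1$), then $\sum_{i=1}^\ell |v_{i-1}v_i| \le \ell$, giving a walk — and hence a path, after removing any repeated vertices — of length at most $2\ell$ in $\widetilde G\prime$. This is exactly the claimed bound.

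One subtlety I would be careful about: the statement says ``if there is a path of length $\ell$ from $x$ to $y$ in $G$, there is one of length at most $2\ell$ in $\widetilde G\prime$.'' I read ``length $\ell$'' here as hop count (number of edges), consistent with how Lemma~\ref{edgeFactor2} bounds $\sum_{e\in P}|e|\le 2|xy|\le 2$ per edge; then concatenation immediately gives $\le 2\ell$. If instead ``length'' meant Euclidean length, the same concatenation argument works verbatim since $\sum_i 2|v_{i-1}v_i| = 2\ell$ directly. Either way the argument is the same concatenation. I would also note that $\widetilde G\prime$ is connected whenever $G$ is, which follows from the existence of these paths $P_i$ (so that $x$ and $y$ are joined for every edge $(xy)\in E$, and hence for every pair by transitivity along a path in the connected graph $G$).

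The proof is essentially a one-paragraph corollary; there is no real obstacle, since all the work was done in Lemma~\ref{edgeFactor2}. The only thing requiring a moment's care is confirming that Lemma~\ref{edgeFactor2} applies to \emph{every} edge of $E$ (it does — its hypothesis is just $(x,y)\in E$ with $|xy|\le r$, which holds for all edges of a UDG) and that concatenating paths and then discarding cycles cannot increase length, which is immediate. Thus the structure is: (1) planarity from the corollary; (2) fix a path in $G$; (3) replace each edge by its $\widetilde G\prime$-path from Lemma~\ref{edgeFactor2}; (4) sum the length bounds; (5) prune to a simple path.
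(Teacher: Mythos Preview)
Your proposal is correct and matches the paper's approach: the paper states that Theorem~\ref{th:2spanner} ``is implied by Lemma~\ref{edgeFactor2}'' and obtains planarity from the corollary that $\widetilde G\prime$ is a subgraph of $G^{\cal S}$, which is exactly your argument. You have simply spelled out the edge-by-edge concatenation that the paper leaves implicit.
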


However, notice that the previous theorem applies to $\widetilde G\prime$ which may contains virtual edges that are not edges of the Unit Disk Graph $G$, instead, a virtual edge represents a path in $G$. The next lemma says that such a path has length two. Notice further that the virtual edges are edges of the Unit Hexagonal Graph (c.f. \cite{review}). Finally, simulations in Section \ref{experiments} show that the virtual edges are rare and disappear as the node density increases.

\begin{lemma}\label{length}
A virtual link $(xy)\in \widetilde E \prime$ has length at most $2r/\sqrt{3}$ when the anchors form an equilateral triangle.
\end{lemma}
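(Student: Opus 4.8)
The plan is to use the structure of a virtual edge established earlier: a virtual link $(xy) \in \widetilde E\prime$ is created precisely when, for some $k$, we have $x \widetilde<_k z$ for some $z$ in the second neighborhood of $x$ with $z <_k y_k$, where $y_k = \min_k\{z \mid x \widetilde<_k z\}$ among $x$'s actual neighbors, and $z$ is out of communication range of $x$. By the remark preceding the lemma, when we use the modified (height-based) VRAC coordinates, the recursion in the construction terminates after a single step, so the endpoint of a virtual link is reached through exactly one intermediate node $w$ (a neighbor of $x$) with $|xw| \le r$ and $|wy| \le r$; this is the content of the companion statement that a virtual edge represents a path of two edges in $G$. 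So the task reduces to bounding $|xy|$ given that $y \in A_k^x$ (say $k=1$ without loss of generality), $w \in A_1^x$ with $w \widetilde<_1 y$, and both $|xw|, |wy| \le r$.

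The key geometric step is to control how far a point can be from $x$ while lying in the greedy region $A_1^x$ and being reachable from $x$ via one hop through $w$. Recall from the discussion in Section \ref{greedy} that with the modified coordinates, the region $A_1^x = \{z \mid x \widetilde<_1 z\}$ is a cone of opening angle $\pi/3$ with apex at $x$ (the intersection of one half-plane ``$z_1 > x_1$'' towards $A_1$ with the complementary half-planes for the other two coordinates), because the anchor triangle is equilateral. Along any edge of $\widetilde G\prime$ the coordinate towards the relevant anchor increases monotonically, and by Lemma \ref{edgeFactor2} (or rather the distance estimate it uses, with the $\frac{2}{\sqrt 3}$ factor coming from the equilateral triangle) the distance travelled in $A_1^x$ between a node with first coordinate $x_1$ and one with first coordinate $y_1$ is bounded by $\frac{2}{\sqrt 3}|x_1 - y_1|$. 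Since $w$ is a neighbor of $x$, $|x_1 - w_1| \le $ (height-coordinate change over distance $\le r$), and similarly $|w_1 - y_1|$ over $|wy| \le r$; adding these and using monotonicity of the first coordinate along $x \to w \to y$ gives $|x_1 - y_1| \le \frac{\sqrt 3}{2}(|xw| + |wy|)$... I would instead argue directly: the displacement $x \to y$ stays in the cone of half-angle $\pi/6$ about the axis towards $A_1$, its component along that axis is $\le |xw|_{\text{axis}} + |wy|_{\text{axis}}$, and the worst case occurs when both hops have length exactly $r$ and are aligned so as to maximize $|xy|$ subject to staying in the cone; a short trigonometric optimization (two segments of length $r$ confined to a $\pi/3$ cone) yields $|xy| \le 2r \cos(\pi/6) = 2r \cdot \frac{\sqrt 3}{2} = r\sqrt 3$, which overshoots. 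So the right bookkeeping must instead use that $w \widetilde<_1 y$ forces $y$ into the narrower subcone $A_1^w \cap A_1^x$; pushing this through should sharpen the constant to $2r/\sqrt 3$.

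Concretely, the steps in order: (1) reduce to the two-hop path $x \to w \to y$ with $|xw|,|wy| \le r$, $w \in A_1^x$, $y \in A_1^w$, all coordinates monotone, using the remark on virtual edges; (2) express $|xy|$ in terms of the coordinate differences $|x_1 - y_1|$ and $|x_i - y_i|$ (for the relevant $i \in \{2,3\}$) exactly as in the computation at the end of the proof of Lemma \ref{edgeFactor2}, namely $|xy|^2 = \frac{4}{3}(|x_i-y_i|^2 + |x_1-y_1|^2 - |x_i-y_i||x_1-y_1|)$; (3) bound the individual coordinate differences: each is the sum of the corresponding differences over the two hops, each hop contributing at most the maximum coordinate change achievable over Euclidean distance $r$, which for the height coordinate in an equilateral triangle is $r$ (attained when the hop is perpendicular to the relevant side); combined with the sign/monotonicity constraints this caps $|x_1-y_1|$ and $|x_i - y_i|$ so that the quadratic form in (2) is at most $\frac{4}{3} r^2$, giving $|xy| \le 2r/\sqrt 3$. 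The main obstacle I expect is step (3): getting the constant exactly $2r/\sqrt 3$ rather than something larger requires carefully exploiting that $w \widetilde<_1 y$ (not merely $w <_1 y$) restricts $y$ to the region ``below'' $w$ in the other two coordinates, which is what prevents both coordinate differences from simultaneously being large; making that simultaneous constraint precise, and checking the extremal configuration is actually realizable by two segments each of length $\le r$, is the delicate part.
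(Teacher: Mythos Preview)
Your proposal contains a genuine error that sends the whole argument off course: you reverse the key ordering constraint between the intermediate node $w$ and the endpoint $y$. In your own setup you correctly record that the virtual-edge endpoint satisfies $z <_k y_k$ (endpoint has \emph{smaller} $k$-th coordinate than the in-range intermediate $y_k$). But when you relabel to $w$ (intermediate) and $y$ (endpoint) you write ``$w\,\widetilde<_1\,y$'', which would mean $w_1 < y_1$ --- the opposite of what holds. The correct constraint is $y_1 < w_1$: the endpoint lies \emph{closer} to $x$ in the first coordinate than the intermediate hop does. This is why your cone trigonometry overshoots to $r\sqrt 3$ and why you then reach for the (nonexistent) ``narrower subcone $A_1^w \cap A_1^x$'' to rescue the constant; with the inequality the wrong way round, $y\in A_1^w$ is simply false.

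Once the inequality is put right, the proof is immediate and needs neither the two-hop reduction nor the quadratic form from Lemma~\ref{edgeFactor2}. Since $y \in A_1^x$ we have $y_2 < x_2$ and $y_3 < x_3$; together with $y_1 \le w_1$ this confines $y$ to the equilateral triangle $T_1$ with apex $x$ bounded by the lines $\{u_2 = x_2\}$, $\{u_3 = x_3\}$, $\{u_1 = w_1\}$. The height of $T_1$ from $x$ to the opposite side is $w_1 - x_1 \le |xw| \le r$, so its side length is at most $2r/\sqrt 3$; the points of $T_1$ farthest from $x$ are the two other vertices, at exactly that distance. Hence $|xy| \le 2r/\sqrt 3$. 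Note this argument uses only the \emph{first} hop $w$ and never the bound $|wy|\le r$; your two-hop reduction is in fact a \emph{consequence} of this lemma (it is what justifies that Algorithm~\ref{algo} needs no recursion), so invoking it here would be circular.
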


\begin{proof}
Without loss of generality, we can suppose that the link is oriented from $x$ to $y$ and with respect to minimizing coordinate $y_1$. The virtual links represents a path $x,z,\dots,y$ where $z$ is inside the communication range of $x$, so $d(x,z) \leq r$. 
We know by construction that considering the first coordinates, we have $y_1 \leq z_1$. It means that $y$ is in the triangle $T_1$ delimited by the three lines $\{u|u_1=z_1\}$, $\{v|v_2=x_2\}$ and $\{w|w_3=x_3\}$ as depicted in Figure \ref{equilateral}. The furthest points of this triangle are the two summits other than $x$. This triangle is equilateral and the edges are of length $2r/\sqrt{3}$, so $|xy| \leq 2/\sqrt{3}r$.
\end{proof}

From this lemma, we obtain that Algorithm \ref{algo} constructs correctly $\tilde G\prime$.

\begin{algorithm}
\begin{algorithmic}
\STATE Input: A Unit Disk Graph $G$.
\STATE Output: $\tilde G  \prime$.
\FOR{all $x\in V$}
\FOR{$k\in {1,2,3}$, $y_k = \min_k\{y \in A_k^x, |y_kx|<r$\}}
	\STATE $x$ broadcast "activate $(xy_k)$"
\STATE $y_k'=\min_k\{y \in A_k^x, |y_ky|<r\}$
	\IF{ $y_k \neq y'_k$}
		\STATE $y_k$ broadcast "disable $(xy_k)$ and activate $(xy'_k)$"
	\ENDIF
\ENDFOR
\ENDFOR
\end{algorithmic}
\caption{Distributed construction of $\tilde G\prime$.}\label{algo}
\end{algorithm}

\begin{corollary}\label{subgraph}
Given a connected graph $G$, if the density is high enough, $\widetilde G \prime$ is a planar 2-spanner of $G$. 
\end{corollary}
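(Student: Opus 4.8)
The plan is to deduce the corollary from Theorem \ref{th:2spanner} together with the fact that, under a sufficient density assumption, Algorithm \ref{algo} produces \emph{no virtual edge}. Theorem \ref{th:2spanner} already gives, for every connected $G$, that $\widetilde G\prime$ is planar and that a path of length $\ell$ in $G$ lifts to a path of length at most $2\ell$ in $\widetilde G\prime$. The only ingredient missing for ``$2$-spanner'' in the usual sense is that $\widetilde G\prime$ be a genuine subgraph of $G$, i.e.\ that every edge of $\widetilde E\prime$ lie in $E$; the sole way this can fail is through a virtual edge. So it suffices to show that when the density is high enough no virtual edge is ever created, from which $\widetilde E\prime\subseteq E$ and Theorem \ref{th:2spanner} closes the argument.

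First I would recall precisely when a virtual edge appears. At a node $x$ and for a direction $k$, the in-range minimum $y_k=\min_k\{z\in A_k^x : |xz|<r\}$ differs from the true minimum $\min_k\{z\in A_k^x\}$ exactly when the greedy region $A_k^x$ contains a node $z$ with $z_k<(y_k)_k$ and $|xz|>r$; in that case the edge $(xy_k)$ is replaced by a virtual edge. I would then exploit the geometry of the modified VRAC coordinates on an equilateral triangle: $A_k^x$ is a $60^\circ$ wedge with apex $x$ opening towards $A_k$, and for every $z\in A_k^x$ one has $|xz|\le \tfrac{2}{\sqrt 3}\,(z_k-x_k)$ (the worst case being the two bounding sides of the wedge, which make an angle $30^\circ$ with the direction of steepest increase of the $k$-th coordinate, exactly as in the proof of Lemma \ref{length}). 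Consequently, any $z\in A_k^x$ with $z_k-x_k<\tfrac{\sqrt 3}{2}\,r$ satisfies $|xz|<r$.

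Next I would argue that a sufficiently high density forces $(y_k)_k$ to be coordinate-close to $x_k$. The truncated sub-wedge $W_{x,k}=\{z\in A_k^x : z_k-x_k<\tfrac{\sqrt 3}{2}r\}$ is contained in $\{z:|xz|<r\}$, has positive area, and (away from $\partial{\cal A}$) contains a disk of radius $\rho_0$ with $\rho_0$ a fixed multiple of $r$ independent of $x$ and $k$; if the density guarantees that every such disk holds a node, then $W_{x,k}$ contains an in-range node in direction $k$, hence $(y_k)_k-x_k<\tfrac{\sqrt 3}{2}r$. But then any candidate out-of-range node $z$ that would trigger a virtual edge would have $z_k-x_k<(y_k)_k-x_k<\tfrac{\sqrt 3}{2}r$, hence $|xz|<r$ by the previous step, contradicting $|xz|>r$. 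Thus no virtual edge is created, $\widetilde G\prime=(V,\widetilde E\prime)$ with $\widetilde E\prime\subseteq E$, and it is a planar $2$-spanner of $G$ by Theorem \ref{th:2spanner}.

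I expect the main obstacle to be making ``density high enough'' uniform over all nodes and directions: the wedge $A_k^x$ pinches to a point at $x$, so bounding $(y_k)_k-x_k$ by a fixed quantity for a fixed covering density needs the quantitative covering argument above (choosing $\rho_0$ so that $W_{x,k}$, minus a safety margin to keep strict inequality $|xz|<r$, still contains a disk of radius $\rho_0$), and nodes near $\partial{\cal A}$ (where the wedge is clipped by the triangle, or where $A_k^x\cap{\cal A}$ is small or empty) require a brief separate check — if $A_k^x\cap{\cal A}$ has no node then $x$ has no outgoing edge in direction $k$ at all, so nothing to fix. The simulations of Section \ref{experiments} are the informal counterpart showing the virtual-edge count indeed vanishes as density grows.
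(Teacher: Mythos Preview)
Your approach matches the paper's exactly: once the density is high enough that no virtual edge is created, $\widetilde G\prime\subseteq G$ and Theorem~\ref{th:2spanner} finishes the job. The paper's own proof is in fact terser than yours---it simply asserts (with reference to the simulations of Section~\ref{experiments}) that virtual edges disappear at high density ``with high probability''---so your geometric covering argument for the wedge $W_{x,k}$ actually supplies detail that the paper leaves informal.
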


\begin{proof}
When the density increases, the number of virtual edges decreases, and if the density is high enough, with high probability, there remains none of them. In this case, $\widetilde G \prime$ is a subgraph of $G$, and the result follows from Theorem \ref{th:2spanner}. 
\end{proof}

To summarize, $\widetilde G\prime$ is a planar graph such that 1) the length of a shortest path in this graph is at most twice the length of a shortest path in the communication graph, and 2) virtual edges correspond to a path of length two in the communication graph. Furthermore, to construct $\widetilde G\prime$ (Algorithm \ref{algo}), the communication complexity in terms of bits at a node $x$ is as follow: each node broadcast once its Id and coordinates, then, each node broadcast the Id of its three neighbors minimizing the orders, and finally, each of this neighbors may send to $x$ an other node Id which is the extremity of a virtual edges starting from $x$. Hence, each node induces an exchange of at most 6 nodes (excluding the neighborhood discovery), which gives a total of at most $6n$ nodes' ids that are broadcast. The computational complexity is $O(n\Delta)$, with $\Delta$ the maximum degree of the communication graph: each node $x$ computes which of its neighbors minimizes each order, plus, $x$ requires each of the selected neighbors $y_1,y_2,y_3$ to verify if there should be a virtual edge, which also consists in computing a minimum of a set of at most $\Delta$ elements.

Now, if instead of using VRAC coordinates, we use the Euclidean coordinates, each node $y$ can computes on its own if it is minimum for one of the three orders for one of its neighbors. It means that we can avoid the statement "broadcast "activate $(xy_k)$"" in Algorithm \ref{algo}. The modified version of Algorithm \ref{algo} needs the broadcasts of at most $3n$ nodes identifiers that can all be performed in a single round of communication. In case there are no virtual edges, the resulting graph is a 2-spanner of the Unit Disk Graph, and in this case no messages are exchanged. It answers the open question 22 of \cite{review} under the hypothesis that the density id high enough. Recall that $\tilde G\prime$ is always a subgraph of a Unit Hexagonal Graph (c.f. \cite{review}), and hence a planar spanner for these graphs.

\section{A local routing algorithm}
We  now propose a local routing algorithm. This algorithm has two modes depending on if the destination is in a greedy region of the sender or not.

\begin{lemma}\label{greedylemma} Recall that we suppose the triangle $\widehat{A_1A_2A_3}$ equilateral. Let $x$ be a node with a message for a destination $y$. If $y$ belongs to a greedy region of $x$, and that $x$ has an out-neighbor in this greedy region, the algorithm proceeds as follow:
\begin{itemize}
\item {\bf (Data delivery)} $|xy| \leq r$ in which case $x$ transmits the data directly to $y$.
\item {\bf (Greedy routing)} $|xy| > r$ and $x$ transmits to its neighbor $x'$ that belongs to the same greedy region as $y$.  We have $|xy|\ge |x'y|$. 
\end{itemize}
\end{lemma}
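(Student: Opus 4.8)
The plan is to verify the two bullet cases directly from the definitions, the only substantive content being the inequality $|xy| \ge |x'y|$ in the greedy-routing case. The data-delivery case is immediate: if $|xy| \le r$ then $(xy) \in E$ by the definition of the UDG, so $x$ simply sends the message along that edge; nothing to prove. So I would state that in one sentence and move on.

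For the greedy case, suppose $y$ lies in the greedy region $A_k^x = \{z \mid x \widetilde<_k z\}$ for some $k \in \{1,2,3\}$, and that $x$ has an out-neighbor $x'$ in $A_k^x$; by the remark following the definition of greedy regions, such an out-neighbor is unique, and by construction of $\widetilde G\prime$ it is the node minimizing $<_k$ among $\{z \in A_k^x : |xz| < r\}$ (possibly replaced by the far endpoint of a virtual edge, but in all cases $x' \widetilde<_k$-precedes every other node of $A_k^x$ that $x$ could reach, in particular $x' \le_k y$). The key geometric fact is then: since $x, x', y$ all satisfy $x <_k x'$, $x <_k y$, $x' <_k y$ (all three have $k$-th coordinate exceeding $x_k$, and $x'_k \le y_k$), and since the anchors form an equilateral triangle, the point $x'$ lies inside the triangle cut off from $\mathcal A$ by the two lines through $x$ parallel to the sides $A_{k-1}A_k$ and $A_kA_{k+1}$ and the line $\{z : z_k = y_k\}$ — exactly the configuration analyzed in the proof of Lemma \ref{edgeFactor2} (Figure \ref{fig:connectivitya}). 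I would reuse that computation verbatim: writing $\beta$ for the relevant angle at $x'$, the law of sines in that triangle gives $|x'y|/|xy| = \sin(2\pi/3 - \beta)/\sin(\pi/3)$ with $\beta \ge \pi/3$, hence $|x'y| \le |xy|$. So progress toward $y$ (in Euclidean distance) is monotone at every hop, and since the $k$-th coordinate strictly increases along the path (no two nodes share a coordinate with respect to an anchor), the route cannot cycle and must eventually reach a node within distance $r$ of $y$, at which point the data-delivery rule fires.

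The main obstacle is really just bookkeeping: making precise that the out-neighbor $x'$ selected by the construction of $\widetilde G\prime$ satisfies $x' \le_k y$ (so that the Figure \ref{fig:connectivitya} argument applies to it and not merely to the abstract $\min_k$), and handling the virtual-edge case — there $x'$ is the far endpoint of a two-hop path, but Lemma \ref{length} and the monotonicity of the $k$-th coordinate along that path keep it inside the same triangle, so the same distance bound holds. I do not expect any new inequality to be needed beyond the one already established in Lemma \ref{edgeFactor2}; the lemma is essentially a restatement of that geometry in routing language, so the write-up can be short and mostly cite the earlier figure and computation.
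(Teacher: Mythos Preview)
Your reduction to the Figure~\ref{fig:connectivitya} computation from Lemma~\ref{edgeFactor2} has a gap: you assert that $x' \le_k y$ because $x'$ is the $<_k$-minimum among nodes of $A_k^x$ that $x$ can reach, but in the greedy-routing case $|xy|>r$, so $y$ is \emph{not} among the nodes $x$ can reach, and the implication fails. Concretely, take $y$ close to the boundary of the $60^\circ$ cone $A_k^x$ at distance just over $r$ from $x$ (so $y_k-x_k\approx r\sqrt3/2$), while $x'$ sits on the cone's axis at distance $r$ (so $x'_k-x_k=r$); then $x'_k>y_k$ and $x'$ lies outside the triangle of Figure~\ref{fig:connectivitya}, so that convexity/law-of-sines argument no longer applies as stated.

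The paper avoids this altogether with a one-line law-of-cosines computation that uses only two facts: the angle $\angle\,yxx'$ is at most $\pi/3$ (because $y$ and $x'$ both lie in the same $60^\circ$ sector $A_k^x$), and $|xx'|\le r<|xy|$. Writing $a=|xy|$, $b=|xx'|$ and $\theta=\angle\,yxx'$ one gets $|x'y|^2=a^2+b^2-2ab\cos\theta\le a^2+b^2-ab\le a^2$ since $\cos\theta\ge 1/2$ and $b\le a$. No information about the relative $k$-coordinates of $x'$ and $y$ is needed. Your approach can be salvaged by splitting on whether $x'_k\le y_k$ or $x'_k>y_k$ (in the second case swap the roles of $x'$ and $y$ in the Figure~\ref{fig:connectivitya} argument to get $|x'y|\le|xx'|\le r<|xy|$), but that is more work than the paper's direct argument and still needs care with virtual edges.
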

\begin{proof} If $x$ transmits directly to $y$ there is nothing to prove. For the other case, let $x'$ be the node that receives the message. We note the vector $xy=a \text{e}^{i\alpha_1}$ and $xx' = b\text{e}^{i\alpha_2}$ ($| xy| = a$ and $a\ge b$). The vector $x' y=a\text{e}^{i\alpha_1}-b\text{e}^{i\alpha_2}$, $| x'y| = a^2+b^2-2ab\cos(\alpha_1-\alpha_2)$ and, $0\le \alpha_1,\alpha_2\le \alpha$ because $x'$ and $y$ belong to the same greedy region. Then, $-\alpha\le \alpha_1-\alpha_2\le\alpha$ and $\cos(\alpha_1-\alpha_2)>1/2$. This is sufficient to conclude that $| x' y| \le | xy|=a$.
\end{proof}

\begin{lemma}\label{structure}
We assume that the anchors form an equilateral triangle. Let $x$ be a node that wants to send a message to a destination $y$ using the graph $\widetilde G \prime$ or direct transmission. We assume that $y$ does not belong to a greedy region of $x$. Without loss of generality, $y$ is in $\bar A_2^x$.
We further assume that $x$ has three out-neighbors and so do any node in the equilateral triangle $T$ with base the segment parallel to $(A_2A_3)$, centered at $x$, of length $4/\sqrt{3}r$ and with the other summit in $\bar A_2^x$ (c.f. Figure \ref{equilateral}). Under those hypothesis, we have two paths (without consideration on the orientation of the edges) $P_1=x,u_0,P_1^0,u_1,...,u_{k-1},P_1^{k-1},u_k,z$ and $P_2=x,v_0,P_2^0,v_1,...,v_{l-1},P_2^{l-1},v_l,z$, without loss of generality $u_0 >_1 v_0$ (when $P_2 \neq x,z$, as if  $P_2 = x,z$, there are no $v_0$), and we have:  
\begin{itemize}
\item $k-1 \leq l \leq k$.
\item the $P_i^j$, for $i \in \{1,2\}$ and $j\leq k$ are monotone paths with respect to $>_1$, potentially of length 0.
\item $\forall u \in P_1\setminus\{z\},\ u \in A_2^x$.
\item $\forall v \in P_2\setminus\{z\},\ v \in A_3^x$.
\item $\forall 0 \leq i \leq l$, there is an oriented edge from $u_i$ to $v_i$ and, $\forall u \in \{u_i,\ P_1^i\}$, $u >_1 v_i$.
\item $\forall 0 \leq i \leq l-1$, there is an oriented edge from $v_i$ to $u_{i+1}$ and, $\forall v \in \{v_i,\ P_2^i\}$, $v >_1 v_{i+1}$.
\item $z$ is in $\bar A_2^x$.
\end{itemize}

Given these two paths, either a node from $\{x,u_0,...u_k,v_0,...v_l,z\}$ has $y$ within its communication range, or $|zy| < |xy|$.
\end{lemma}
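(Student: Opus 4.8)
The plan is to split the statement into its two halves: the \emph{existence} of the two paths $P_1,P_2$ with all the listed structural properties, and the concluding dichotomy. The first half is a bookkeeping induction; the second is where the geometry of the equilateral frame is used.

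\emph{Construction of $P_1$ and $P_2$.} I would build the two paths simultaneously, rung by rung, as a ``ladder''. Since $y\notin A_1^x\cup A_2^x\cup A_3^x$ and, after relabelling, $y\in\bar A_2^x$, the greedy regions flanking $y$ are $A_2^x$ and $A_3^x$, and the hypothesis that $x$ has an out-neighbour in each greedy region supplies the level-$0$ rungs $u_0\in A_2^x$ and $v_0\in A_3^x$ (with $u_0>_1v_0$ up to swapping the roles of the two cones). Inductively, suppose the ladder has been built down to rungs $u_i\in A_2^x$, $v_i\in A_3^x$ with the oriented edge $u_i\to v_i$. The rung of smaller first coordinate still lies in the triangle $T$ --- this is precisely where one uses the hypothesis that every node of $T$ has three out-neighbours, together with the fact that a ladder edge has length at most $r$ (at most $2r/\sqrt3$ if it is virtual, by Lemma~\ref{length}), so the ladder cannot leave $T$ before the first coordinate has dropped by a definite amount; hence that rung has out-neighbours in the two relevant greedy regions, which yield the next rungs, and the connecting monotone sub-paths $P_1^i,P_2^i$ (of length possibly $0$) are obtained either by expanding a virtual out-edge via Lemma~\ref{length} or by invoking Lemma~\ref{edgeFactor2}. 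All the invariants --- $P_1\setminus\{z\}\subset A_2^x$, $P_2\setminus\{z\}\subset A_3^x$, the $>_1$-monotonicity of the $P_i^j$, the interleaving inequalities and the bound $k-1\le l\le k$ --- follow by chasing the $60^\circ$ greedy cones of the successive rung nodes: a greedy cone $A_j^w$ of a node $w$ with $w_1<x_1$ lying in $A_2^x$ or $A_3^x$ stays inside the half-plane $\{z:z_1<x_1\}$ and meets $A_2^x$, $\bar A_2^x$, $A_3^x$ in a controlled pattern, which is what forces the ladder to funnel towards $\bar A_2^x$. The construction stops at the first rung that falls into $\bar A_2^x$: this rung is the node $z$. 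Termination is clear because the relevant first coordinate strictly decreases from rung to rung and the node set is finite.

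\emph{The dichotomy.} The paths $P_1$ (in the left cone $A_2^x$) and $P_2$ (in the right cone $A_3^x$), joined by the cross-edges, bound a region $R$ of the plane contained in the half-plane $\{z:z_1<x_1\}$, with $x$ as its top vertex and $z$ as its bottom vertex; $R$ separates that half-plane into the part enclosed by the ladder and the part outside it. I would then distinguish two cases. If $y\in R$, then $y$ lies in one of the elementary cells of the ladder, each bounded by two cross-edges (length at most $r$, or at most $2r/\sqrt3$ by Lemma~\ref{length}) and two $>_1$-monotone sub-paths confined to a thin first-coordinate slab inside a $60^\circ$ cone; such a cell has diameter at most $r$, so $y$ is within communication range of one of the rung nodes $u_i,v_i$ on its boundary, which is the first alternative. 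If $y\notin R$, then because $y\in\bar A_2^x$ and the ladder funnels $\bar A_2^x$ towards $z$, the vectors $x\to z$ and $z\to y$ both point into ``downward'' $60^\circ$ cones sharing the axis perpendicular to $A_2A_3$; hence the angle $\angle xzy$ is obtuse --- at least $120^\circ$ in the generic subcase --- and the law of cosines gives $|xy|^2\ge|xz|^2+|zy|^2+|xz|\,|zy|>|zy|^2$, i.e.\ $|zy|<|xy|$. This is the same flavour of estimate as the ``$|c_2y|\le|xy|$'' step in the proof of Lemma~\ref{edgeFactor2} and the cosine bound of Lemma~\ref{greedylemma}; the inequality is strict because no two nodes share a coordinate relative to a common anchor.

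\emph{Main obstacle.} The delicate point is the geometry of the ``outside'' case: one must verify that ``$y\notin R$'' genuinely forces $z\to y$ into a downward cone pointed the same way as $x\to z$ --- equivalently, that $y$ cannot lie just below $z$'s first-coordinate level while sitting far to one side of it --- and this relies on what the construction pins down about $z$ (that it lands in $\bar A_2^x$, that the last ladder edges reaching it have bounded length, and that the paths are $>_1$-monotone); so the subcases near the boundary rays of $\bar A_2^x$ and near the bottom vertex $z$ must be handled by pushing the cone bookkeeping a little further. A secondary, more mechanical difficulty is establishing the exact combinatorial shape of the ladder ($k-1\le l\le k$, the alternation of the cross-edges $u_i\to v_i$ and $v_i\to u_{i+1}$, and that the terminal rung really lands in $\bar A_2^x$ rather than slipping into $A_2^x$ or $A_3^x$), which calls for a carefully chosen invariant in the inductive step.
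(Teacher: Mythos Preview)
Your outline has the right two-phase shape (build the ladder, then argue the dichotomy), but the construction phase is missing the key ingredient the paper actually uses: \emph{planarity of $\widetilde G'$}. You propose to obtain the cross-edges $u_i\to v_i$ and $v_i\to u_{i+1}$ by ``chasing the $60^\circ$ greedy cones'', but cone geometry alone does not force $u_0$'s out-neighbour in $A_3^{u_0}$ to be $v_0$; it only tells you that this out-neighbour lies in a certain wedge, which could contain many nodes of $A_3^x$ other than $v_0$. The paper's argument here is that if $u'\neq v_0$ then, by repeatedly following out-neighbours towards $A_2$, one would produce an infinite sequence of nodes trapped by the edge $(u_0u')$ --- impossible because $\widetilde G'$ is planar. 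The same planarity trick is used again to manufacture the monotone sub-paths $P_i^j$ and to handle the delicate boundary case where an out-neighbour falls directly into $\bar A_2^x$ but is not adjacent to the previous rung. Your appeal to Lemma~\ref{edgeFactor2} does not substitute for this: that lemma produces a $\widetilde G'$-path replacing a $G$-edge, whereas here you need to identify \emph{which} $\widetilde G'$-edge is the out-neighbour, and to show it hits exactly the opposite rung.

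For the dichotomy, your ``inside'' case is slightly off: you assert each ladder cell has diameter at most $r$, but you yourself note the cross-edges may be virtual of length up to $2r/\sqrt3>r$, so the diameter bound fails as stated. The paper instead argues that the polygon $x,u_0,\dots,u_k,z,v_l,\dots,v_0$ decomposes into triangles with two sides of length at most $2r/\sqrt3$, and that such triangles are covered by the union of the radius-$r$ disks centred at their vertices; that is the correct covering statement. Your ``outside'' case via an obtuse angle at $z$ and the law of cosines is in fact more explicit than the paper's one-line ``then $|zy|<|xy|$'', and is a reasonable way to finish --- but note that what pins $z$ down enough for the angle bound is again the planarity-driven fact that $z$ is adjacent to both $u_k$ and $v_l$, which your construction has not secured.
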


\begin{figure}[h]
\begin{center}
\scalebox{0.4}{\input{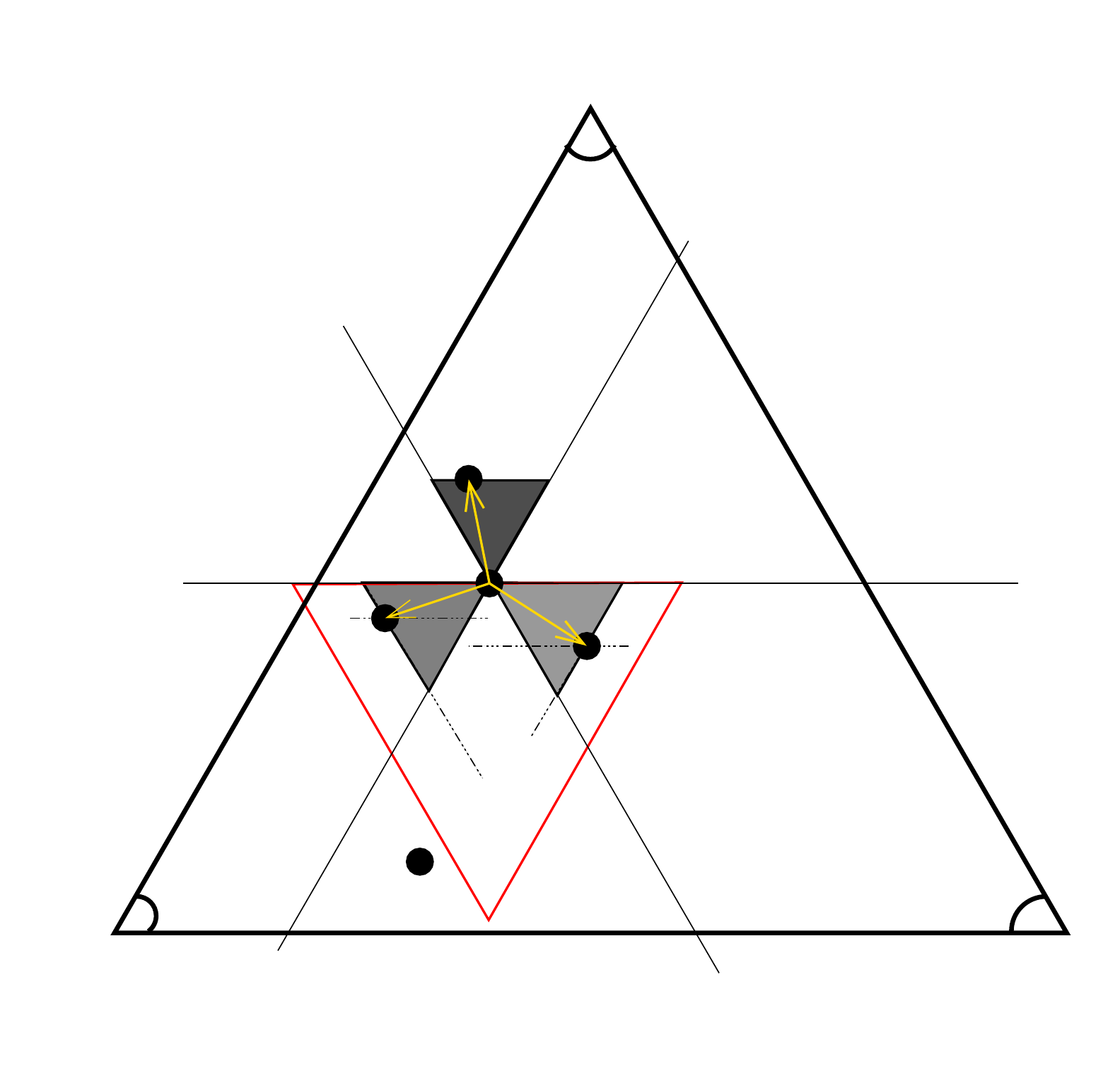_t}} \hspace{2cm}
\scalebox{0.4}{\input{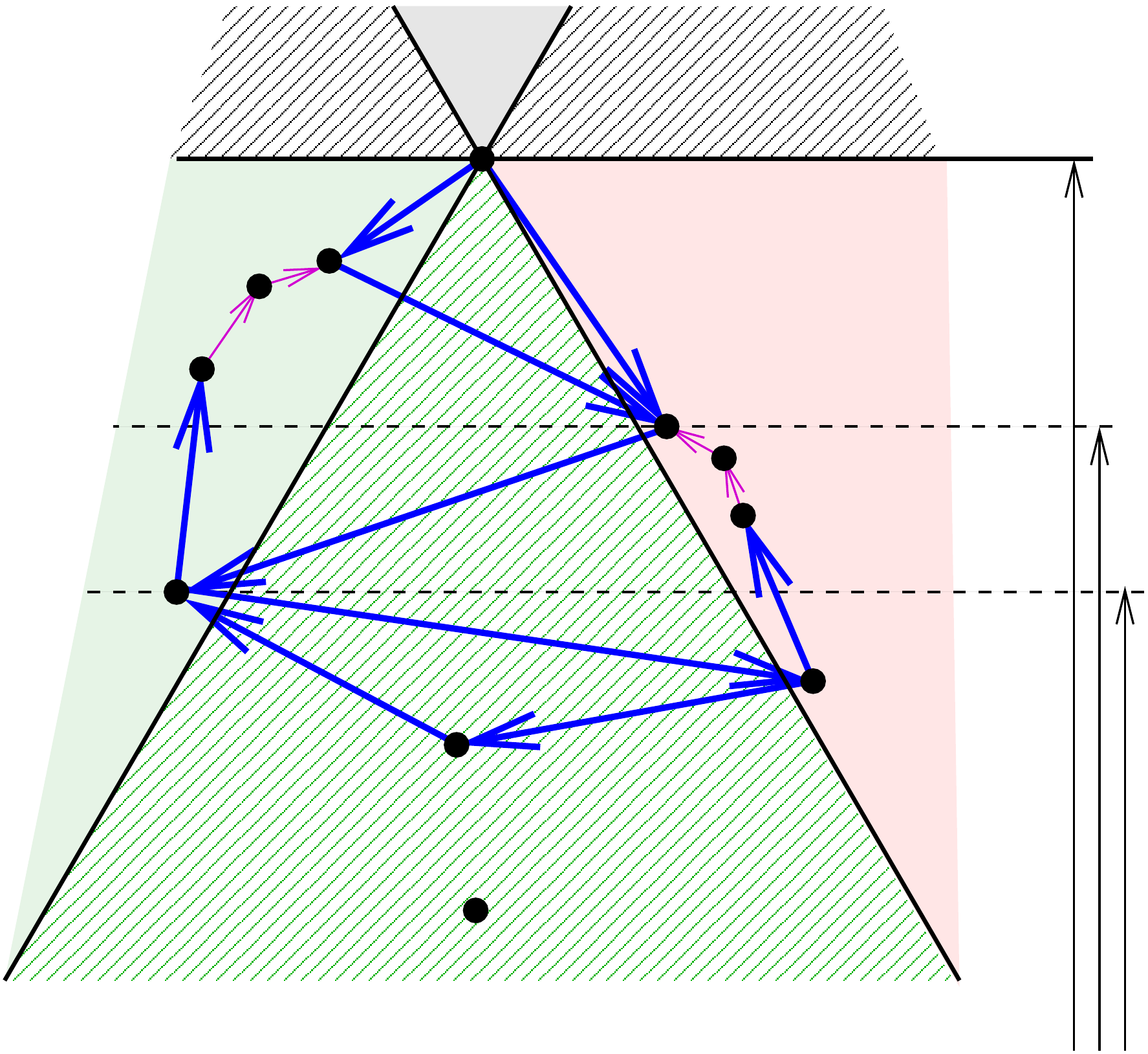_t}}
\caption{On the left side, the three greedy regions associated with $x$ and the edges to the three nodes $u=min_2\{z\mid x\widetilde<_2~z\},\ v=min_3\{z\mid x\widetilde<_3~z\}\ and,\ w=min_1\{z\mid x\widetilde<_1~z\}$. On the right, a path leading to $z$.}\label{equilateral}
\end{center}
\end{figure}

\begin{proof}
By hypothesis, $x$ has a neighbor (in $\widetilde G \prime$)  $u_0$ in $A_2^x$ and a neighbor $v_0$ in $A_3^x$, without loss of generality $u_0 >_1 v_0$. By Lemma \ref{length}, which bounds the length of an edge, $u_0$ is in $T$, so $u_0$ has three out-neighbors. In particular it has an out-neighbor $u'$ in $A_3^{u_0}$. $u'$ can not be in $A_3^{u_0} \cap A_2^x$ as otherwise we would have a virtual link $(xu')$ instead of the link $(xu_0)$. 

If $u'$ is in $\bar A_2^x$, if we can prove that $(xu')\in \widetilde E\prime$ setting $z=u'$, the theorem would be verified with $P_1=x,u_0,z$ and $P_2=x,z$. But, we may not have an edge between $x$ and $u'$, however we prove that $u'$ can be replaced by a $z"$ satisfying all the desired properties. $u'$ is in $T$, so it has an out-neighbor $u"$ in $A_1^{u'}$. If $u" \not = x$, $u"$ has an out-neighbor in the greedy region oriented towards $A_2^{u"}$ that in turn has a neighbor in its greedy region oriented towards $A_2$, ... By planarity, since we cannot cross the edge $u_0z$ this path has to go through $v_0$. So we have a path from $u"$ to $u_0$. Similarly, $u"$ has an out-neighbor $u"$ in $A_1^{u"}$ which is either $x$ or has a path leading to $u_0$. Ultimately, we will have a node $z'$ with out-neighbor $x$ and a path to $u_0$. Using the same argument and considering neighbors in $A_2^{z'}$, we have that $z'$ has $u_0$ has neighbor in $A_2^{z'}$, or it has $w$ whose out-neighbor in $A_1^w$ is $x$, and that has a path leading to $u_0$ in $A_2^{w}$. Going along this path, we reach $z"$ whose out-neighbor in $A_1^{z"}$ is $x$, and  out-neighbor in $A_2^{z"}$ is $u_0$. Hence the theorem is verified for  $P_1=x,u_0,z"$ and $P_2=x,z"$

We now suppose that $u'$ is in $A_3^x$ and we want to prove $u'=v_0$. By contradiction suppose that $u' \not=v_0$. Recall that we have $u' <_1 u_0$, by construction of $\widetilde G\prime$ as well as $u' <3 v_0$. By planarity, we have $v_0 >_1 u'$.

The node $u'$ has a neighbor in its greedy region $A_1^{u'}$. This neighbor must be $v_0$. Indeed, if not, using $u' <3 v_0$ and using the same argument as before, this neighbor $u"$ has a neighbor in the greedy region $A_2^{u"}$ that in turn has a neighbor in its greedy region oriented towards $A_2$, ... But, since we cannot cross the edge $u_0u'$ by planarity, there is an infinity of such nodes which is impossible. So $u'$ has $v_0$ has neighbor in its greedy region $A_1^{u'}$. But then $v_0$ has a neighbor in its greedy region $A_2^{v_0}$. Looking for a sequence of neighbors in the greedy region oriented towards $A_2$, will ultimately lead to cross edge $(u_0u')$ which contradict the planarity of $\widetilde G \prime$.

So $u' = v_0$

\ \\

Since $v_0$ is in $T$, it has a neighbor $v'$ in its greedy region oriented towards $A_2$. $v'$ cannot be in $A_3^{u_0}$, the same region as $v_0$ because if it was true, by construction of $\widetilde G\prime$, there would be an edge between $u_0$ and $v'$ instead of the edge $(u_0v_0)$.

\ \\

First suppose that $v' \in \bar A_2^x$.
We would like to set $z=v'$, but, as for the very first case considered in the proof, it may not be appropriate. Indeed, there may not be an edge $(zv_0)$. So we proceed as follow: we consider the out-neighbor $v"$ of $v'$ in $A_1^{v'}$. If $v"=v_0$, then we are done by setting $z=v'$. Else, we can prove that $v"$ has $u_0$ as out-neighbor in $A_3^{v"}$. We proceed recursively until we reach $v"'$ which has $u_0$ as out-neighbor in $A_3^{v"'}$ and $v_0$ as out-neighbor in $A_1^{v"'}$ (Notice that $u_0$ can not have $v"'$ as out-neighbor in $A_3^{u_0}$ as it already has $v_0$). We set $z=v"'$ and the theorem is proved.

\ \\

We now study the case $v' \in A_2^x$.
Let call $v"$ the out-neighbor of $v'$ that is in the greedy region oriented towards $A_1$. If $v"=v_0$, all is fine. Else, we must have $v">_0 u_0$ (otherwise the neighbors directed towards $A_3$ will cross the edge $v_0v'$). Considering the following out-neighbors in the greedy region oriented towards $A_1$, we will reach a node $v"'$ whose out-neighbor in $A_1^{v"'}$ is $u_0$, or which is the out-neighbor of $x$ in $A_2^{x}$. We call this path from $v"$ to $v"'$, $P_1^0$ and we set $v'=v_1$.

\ \\

We continue similarly by considering the out-neighbor of $v_1$ in $A_2^{v_1}$.

\ \\

To see that this process terminates, notice that the first coordinate of the vertices on the paths we construct decrease as we get closer to the line $(A_2A_3)$, plus the distance between a node in $A_2^x$ and a node in $A_3^x$ is lower bounded by a bound which increases when we get closer to the line $(A_2A_3)$, plus the length of an edge is upper bounded by Lemma \ref{length}. Hence the process has to finish on a vertex $z$ in $\bar A_2^x$.

\ \\

We now prove the second part of the Lemma.
The maximum length of a virtual edge is $2/\sqrt{3}r$, hence we have that the polygon formed by $x,u_0,...u_k,z,v_l,...v_0$ is composed of triangles with two side of length at most $2/\sqrt{3}$. So it is covered by the union of the disk of radius $r$ centered on $\{x,u_0,...u_k,v_0,...v_l,z\}$.
If $y$ is inside the polygon formed by $x,u_0,...u_k,z,v_l,...v_0$, it is within the communication radius of one of these vertices.
If not, $y$ is bellow the polygon $x,u_0,...u_k,z,v_l,...v_0$ and out of range from both $v_l$ and $u_k$ in particular, and then $|zy| < |xy|$ .
\end{proof}

\begin{theorem}[Zig-Zag : an extended greedy routing]\label{thm:zigzag}

We assume that the anchors form an equilateral triangle. Let $x$ be a node that needs to transmit a message to a destination node $y$. If any node at distance less than $4/\sqrt{3}r$ of $x$ has three out-going neighbors, then the following strategy delivers the data either to $y$ or to a node $z$ closer to $y$ than $x$.

\begin{itemize}
\item If $y$ is in the communication range of $x$, $x$ sends the message to $y$.
\item If $y$ is in a greedy region of $x$, $x$ sends the message to its out-neighbor which is in the same greedy region. 
\item Otherwise, use the restricted greedy routing process starting at $x$. Wlog $y \in \bar A_2^x$.
\begin{itemize}
\item $x$ sends the message to its out-neighbor in $A_2^x \cup A_3^x$ which has the highest first coordinate.
\item A node $u\in A_2^x$ sends the message to $y$ if possible or to its out-neighbor $v$ verifying $u>_1 v$ and $v>_3 x$. If $v>_3 x$ and $v >_2 x$ (i.e. $u \in \bar A_2^x$), end the restricted greedy routing process.
\item A node $v\in A_3^x$ sends the message to $y$ if possible or to its out-neighbor $u$ verifying $v>_1 u$ and $u>_2 x$. If $u>_3 x$ and $u >_2 x$ (i.e. $u \in \bar A_2^x$), end the restricted greedy routing process.
\end{itemize}
\end{itemize}
\end{theorem}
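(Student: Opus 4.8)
The plan is to read the theorem as the algorithmic counterpart of the two structural lemmas already established, Lemma~\ref{greedylemma} and Lemma~\ref{structure}, and to reduce it to them by a case analysis on the position of the destination $y$ relative to the sender $x$. Almost all of the content is in checking that the myopic, coordinate-only forwarding rules of the statement really do what those lemmas guarantee; no new geometry should be needed.

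\emph{Easy cases.} If $|xy|\le r$, the first bullet applies and $y$ receives the message directly, so there is nothing to prove. If $y$ lies in a greedy region $A_k^x$ of $x$ and $|xy|>r$, then, since $0<4/\sqrt{3}\,r$, the hypothesis gives that $x$ has three out-neighbours, one of which, say $x'$, lies in $A_k^x$; hence the hypotheses of Lemma~\ref{greedylemma} are met, $x$ forwards to $x'$, and that lemma gives $|x'y|\le|xy|$. Re-running its computation with $|xx'|<|xy|$ and $\cos(\alpha_1-\alpha_2)>1/2$ upgrades this to the strict inequality $|x'y|<|xy|$, so $x'$ is a node strictly closer to $y$ than $x$.

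\emph{The zig-zag case.} Suppose $y$ belongs to no greedy region of $x$, and relabel the anchors so that $y\in\bar A_2^x$. I would first verify the hypotheses of Lemma~\ref{structure}: its triangle $T$ is contained in the disc of radius $2r<4/\sqrt{3}\,r$ centred at $x$, so by assumption every node of $T$ — and, via Lemma~\ref{length}, every node the process can reach before $T$ is left — has three out-neighbours. Lemma~\ref{structure} then supplies the paths $P_1$ (through $A_2^x$) and $P_2$ (through $A_3^x$), the cross-edges $u_i\to v_i$ and $v_i\to u_{i+1}$, and the terminal node $z\in\bar A_2^x$. The substantive step is to identify the walk produced by the third bullet with the zig-zag $x,u_0,v_0,u_1,v_1,\dots,z$ obtained by alternating those cross-edges: the out-neighbour of $x$ in $A_2^x\cup A_3^x$ of largest first coordinate is $u_0$ (because $u_0>_1 v_0$); a node $u_i\in A_2^x$ forwards along $u_i\to v_i$, this being the out-neighbour with $u_i>_1 v_i$ and $v_i>_3 x$, while the stopping test ``$v_i>_2 x$'' fires exactly when the head of that edge is $z\in\bar A_2^x$; and symmetrically for a node $v_i\in A_3^x$. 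One then checks that the walk is well-defined (each visited node carries the requested out-neighbour, which is where the ``three out-neighbours within $4/\sqrt{3}\,r$'' hypothesis is used) and that it terminates, since the first coordinate strictly decreases along the $v_i$ and is monotone along the intervening subpaths while the graph is finite. Once the walk is recognised as a route ending at $z$, the last assertion of Lemma~\ref{structure} closes the argument: either some node of $\{x,u_0,\dots,u_k,v_0,\dots,v_l,z\}$ has $y$ within range, and by the ``send to $y$ if possible'' clauses that node delivers the message, or $|zy|<|xy|$ and the process stops at $z$, a node closer to $y$ than $x$.

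\emph{Main obstacle.} The delicate point is confined to the zig-zag case: turning the global, planarity-driven \emph{existence} statement of Lemma~\ref{structure} into the assertion that the purely local rules actually \emph{track} one of its paths — in particular, pinning down unambiguously which of a node's three out-edges each rule selects and checking that the two stopping tests both trigger precisely at $z$. The remaining cases are immediate consequences of Lemma~\ref{greedylemma}.
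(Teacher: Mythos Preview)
Your proposal is correct and follows essentially the same approach as the paper: a case split into direct delivery, the greedy-region case handled by Lemma~\ref{greedylemma}, and the zig-zag case handled by Lemma~\ref{structure}, together with the observation that the local forwarding rules trace the alternating path $u_0,v_0,u_1,\dots,z$ produced by that lemma. You are in fact more thorough than the paper, which asserts in one line that the restricted greedy process follows that path; your explicit check that $T$ lies inside the ball of radius $4/\sqrt{3}\,r$ and your discussion of which out-edge each rule selects and why the stopping tests fire at $z$ fill in exactly the details the paper leaves to the reader.
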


\begin{proof} 
By applying Lemmas \ref{greedylemma} and \ref{structure} we see that the routing strategy leads to $y$ or to a node that is closer to $y$ than $x$ ($z$ in Lemma \ref{structure}). 
Indeed, Lemma \ref{structure} ensures that a restricted greedy routing process starting at $x$ follows the path $u_0,v_0,u_1,...,z$ using the same notations.\end{proof}

\section{Simulations}\label{experiments}
We implemented both the planarization algorithm and Zig-Zag, the routing algorithm, and we present bellow the results of the simulations. We considered a network composed of 300 sensors spread in a square area $[0;1]\times [0;1]$ with three anchors at position $(0.5,3.5)$, $(-\frac{5}{\sqrt 3} + 0.5,-1.5)$ and $(\frac{5}{\sqrt 3} + 0.5,-1.5)$. We considered a communication radius $r$ for the sensors which ranges from $0.11$ to $0.225$. For each value of the communication radius, we performed the average over 1000 networks and successfully routed messages. Notice that the value are plotted with respect to the average degree of the nodes in the UDG, as in $\tilde G\prime$, the average degree is upper bounded by six since it is planar.

We first plot the number of virtual edges in Figure \ref{simedges}. The simulations indicate that the number of virtual edges tends to 0 when the node density increase. Indeed, the average number of virtual edges decreases from 1.6 when the network is sparse, to 0.03 when it is dense.

Then, in Figure \ref{simstretch}, we plot the average stretch of the path computed in $\tilde G \prime$ by Zig-Zag, where the stretch of a path is the length of the computed path divided by the euclidean distance between the source and the destination. We observe that we have a stretch which is between 1.3 and 1.4 which is better than the theoretical stretch factor of 2, and we compare it to the stretch of the greedy algorithm (in $\tilde G\prime$), which is slightly better and of approximately 1.3. This apparent gain of efficiency is to be mitigated by the results shown on Figure \ref{success} which indicates the success rate of the greedy algorithm and of Zig-Zag. The success rate of Zig-Zag tends to 100\% as the density increases, which backen the theoretical results of Theorem \ref{thm:zigzag}. Indeed, when the density increases, the hypothesis of Theorem \ref{thm:zigzag} are verified, whereas they may not be verified at some nodes when it is low, thus reducing the success rate. The success rate of the greedy algorithm also increases with the node density, however, it remains bellow 80\%.

\begin{figure}[h]
\begin{center}
\subfigure[Average number of virtual edges]{\scalebox{0.7}{\includegraphics{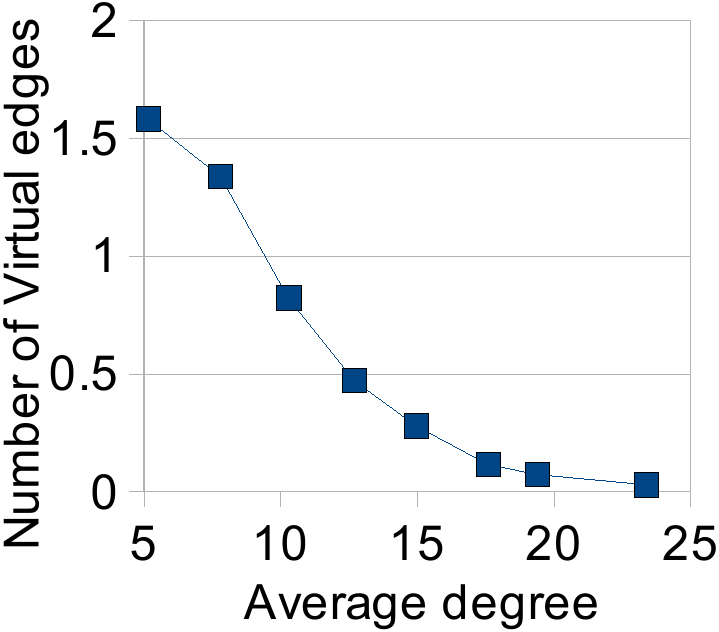}\label{simedges}}}\hspace{1cm}
\subfigure[Average path stretch]{\scalebox{0.7}{\includegraphics{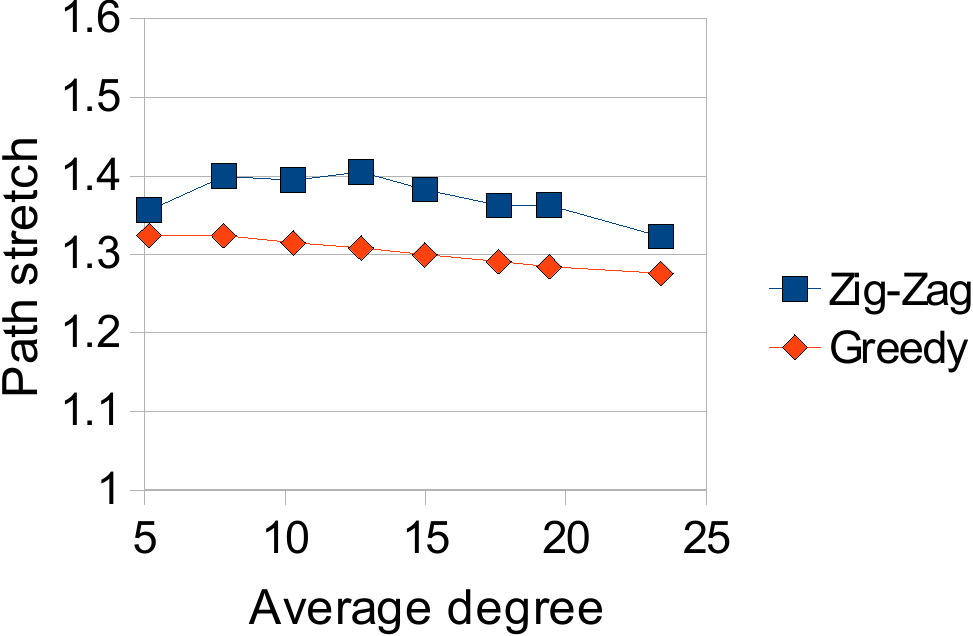}\label{simstretch}}}\hspace{1cm}
\subfigure[Success rates]{\scalebox{0.7}{\includegraphics{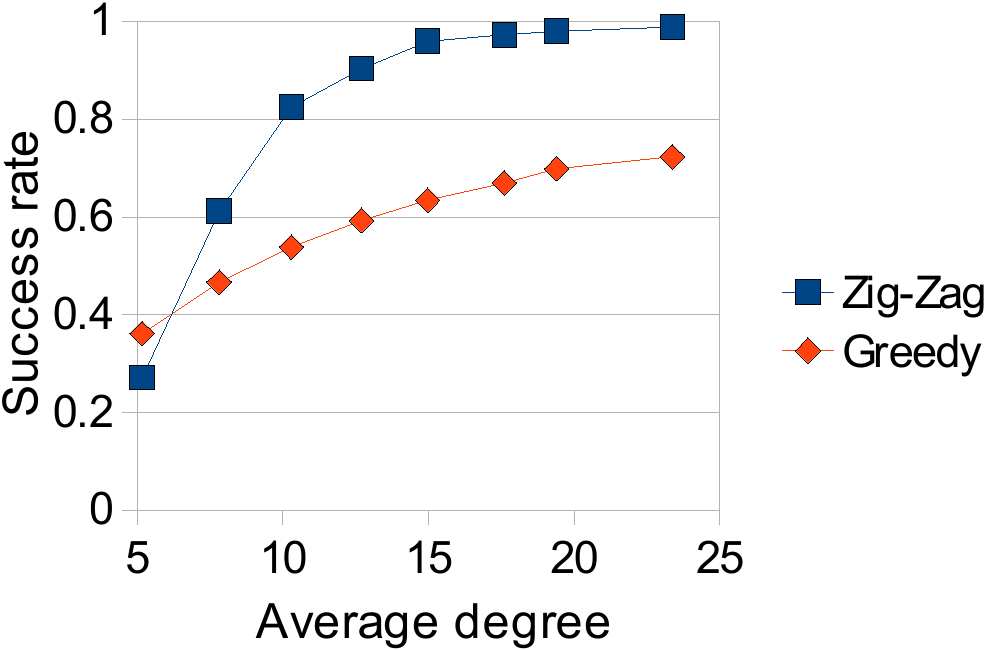}\label{success}}}
\caption{}\label{sim}
\end{center}
\end{figure}
\bibliographystyle{IEEEtran}
\bibliography{Localisation}
\end{document}